\newtheorem{thm}{Theorem}
\newtheorem{cor}[thm]{Corollary}
\newtheorem{lem}[thm]{Lemma}
\newtheorem{prop}[thm]{Proposition}
\newtheorem{obser}[thm]{Observation}
\newtheorem{definition}{Definition}
\newcommand{\dist}{\mathop{\mathrm{dist}}}
\def\O{\mathcal{O}}
\def\I{\mathcal{I}}
\def\J{\mathcal{J}}
\newcommand{\vc}{{\sc Vertex Cover}\xspace}
\newcommand{\fix}{{\sc Fix}\xspace}
\newcommand{\listfix}{{\sc List-Fix}\xspace}
\newcommand{\rfix}[1]{{\sc $#1$-Fix}\xspace}
\newcommand{\rcoloring}[1]{{\sc $#1$-Coloring}\xspace}
\newcommand{\preext}[1]{{\sc $#1$-PreExt}\xspace}
\newcommand{\fixnumb}[3]{{\overline{\chi}^{#2}_{#1}(#3)}\xspace}
\newcommand{\fixnumber}[1]{{\overline{\chi}(#1)}\xspace}
\newcommand{\msi}{{\sc Multicolored Subgraph Isomorphism}\xspace}
\newtheorem*{eth-env}{Exponential Time Hypothesis}
\newtheorem*{seth-env}{Strong Exponential Time Hypothesis}
\DeclareMathOperator*{\tw}{tw}
\title{Fixing improper colorings of graphs\footnote{An extended abstract of this paper was presented on the conference SOFSEM 2015 \cite{sofsem}.}}
\author{Valentin Garnero}
\affil{Université d'Orl\'eans, INSA Centre Val de Loire\\ 
LIFO, 45067 Orl\'eans, France.\\
E-mail: \{valentin.garnero, mathieu.liedloff, pedro.montealegre\}@univ-orleans.fr}
\author{Konstanty Junosza-Szaniawski}
\affil{Warsaw University of Technology, Faculty of Mathematics and Information Science,
Koszykowa 75, 00-662 Warszawa, Poland.\\
E-mail: \{k.szaniawski, p.rzazewski\}@mini.pw.edu.pl}
\author[1]{Mathieu Liedloff}
\author[1]{Pedro Montealegre}
\author[2,3]{Pawe{\l} Rz\k{a}\.zewski\thanks{Supported by ERC Starting Grant PARAMTIGHT (No. 280152).}}
\affil{Institute of Computer Science and Control, Hungarian Academy of Sciences (MTA SZTAKI), Budapest, Hungary}
\begin{document}
\maketitle

\begin{abstract}
In this paper we consider a variation of a recoloring problem, called the Color-Fixing. Let us have some non-proper $r$-coloring $\varphi$ of a graph $G$.
We investigate the problem of finding a proper $r$-coloring of $G$, which is ``the most similar'' to $\varphi$, i.e., the number $k$ of vertices that have to be recolored is minimum possible. We observe that the problem is NP-complete for any fixed $r \geq 3$, even for bipartite planar graphs. Moreover, it is $W[1]$-hard even for bipartite graphs, when parameterized by the number $k$ of allowed recoloring transformations. On the other hand, the problem is fixed-parameter tractable, when parameterized by $k$ and the number $r$ of colors.

We provide a $2^n \cdot n^{\mathcal{O}(1)}$ algorithm for the problem and a linear algorithm for graphs with bounded treewidth. We also show several lower complexity bounds, using standard complexity assumptions.
Finally, we investigate the {\em fixing number} of a graph $G$. It is the minimum $k$ such that $k$ recoloring transformations are sufficient to transform \emph{any} coloring of $G$ into a proper one.
\end{abstract}

\section{Introduction}
Many problems in real-life applications have a dynamic nature. When the constraints change, the previously found solution may no longer be  optimal or even feasible. Therefore often there is a need to recompute the solution (preferably using the old one). This variant is called a {\em reoptimization} and has been studied for many combinatorial problems, e.g. TSP (see Ausiello {\em et al.} \cite{AEMP}), Shortest Common Superstring (see Bil\`{o} {\em et al.} \cite{BBKKMSZ}) or Minimum Steiner Tree (see Zych and Bil\`{o} \cite{ZB}). We also refer the reader to the paper of Shachnai {\em et al.} \cite{STT}, where the authors describe a general model for combinatorial reoptimization.

Another family of problems, in which we deal with transforming one solution to another, is {\em reconfiguration}. Here we are given two feasible solutions and want to transform one into another by a series of simple transformations in such a way that every intermediate solution is feasible (see e.g. Ito {\em et al.} \cite{IDHPSUU}). When we consider a reconfiguration version of the graph coloring problem, we want to transform one proper coloring into another one in such a way that at every step we can recolor just one vertex and the coloring obtained after this change is still proper.
Bonsma and Cerceda showed that deciding if a proper $k$-coloring $\varphi$ can be transformed into another $k$-coloring $\varphi'$ is PSPACE-complete for every $r \geq 4$~\cite{BC}.

A special attention has been paid to determining if a given graph $G$ is {\em $r$-mixing}, i.e., if for any two proper $r$-colorings of $G$ you can transform one into another (maintaining a proper $r$-coloring at each step). Cereceda {\em et al.} \cite{CHJ,CHJ2,CHJ3} characterize graphs, which are 3-mixing, and they provide a polynomial algorithm for recognizing them. 
There are also some results showing that a graph $G$ is $f(G)$-mixing, where $f(G)$ is some invariant of $G$. For example, Jerrum \cite{Jerrum} showed that every graph $G$ is $(\Delta(G)+2)$-mixing. This bound was later refined by Bonamy and Bousquet \cite{BB}, who proved that every graph is $(\chi_g(G)+1)$-mixing, where $\chi_g(G)$ denotes the Grundy number of $G$, i.e., the highest possible number of colors used by a greedy coloring of $G$. Clearly $\chi_g(G) \leq \Delta(G)+1$.

Another direction of research in $r$-mixing graphs is the maximum number of transformations necessary to obtain one $r$-coloring from another one, i.e., the distance between those colorings. Bonamy and Bousquet \cite{BB} show that if $r \geq \tw(G)+2$ (where $\tw(G)$ denotes the {\em treewidth} of $G$), then any two $r$-colorings of $G$ are at distance of at most $2(n^2+n)$, while for $r \geq \chi_g(G)+1$, any two $r$-colorings are at distance of at most $4 \cdot \chi_g(G) \cdot n$.

A slightly different problem has been considered by Felsner {\em et al.} \cite{FHS}. They also transformed one $r$-coloring to another one using  some local changes, but did not require the initial coloring to be proper (the final one still has to be proper). Also, a vertex could be recolored to color $x$ if it did not have any neighbor colored with $x$ (strictly speaking, any out-neighbor, as the authors were considering directed graphs). They showed that if $G$ is a 2-orientation (i.e., every out-degree is equal to 2) of some maximal bipartite planar graph (i.e., a plane quadrangulation), then every proper 3-coloring of $G$ could be reached in $\O(n^2)$ steps from any initial (even non-proper) 3-coloring of $G$. Similar results hold for 4-colorings and 3-orientations of maximal planar graphs (i.e., triangulations).

In this paper we consider a slightly different problem. We start with some (possibly non-proper) $r$-coloring and ask for the minimum number of transformations needed to obtain a proper $r$-coloring (any proper $r$-coloring, not a specific one). We are allowed to change colors of vertices arbitrarily, provided that we recolor just one vertex in each step. 
We mainly focus on the computational aspects of determining if, starting with some given $r$-coloring of $G$, we can reach a proper $r$-coloring in at most $k$ steps.

The paper is organized as follows. 
In Section \ref{sec:complexity} we show that our problem is NP-complete for any $r \geq 3$, even if the input graph is planar and bipartite (here $k$ is a part of the input).
In Section \ref{sec:exact} we provide an $2^n \cdot n^{\O(1)}$ algorithm for the problem and show that it is essentially optimal under the ETH.
In the next two Sections we focus on the parameterized complexity (we refer the reader to \cite{book,DF} for an introduction to the parameterized complexity theory). First, we present an algorithm solving the problem in time $(2(r-1))^k \cdot n^{\O(1)}$, which showns that our problem is FPT, when parameterized by $k+r$ (Section \ref{sec:parameterized}).
Then we show that the problem is $W[1]$-hard, when parameterized by $k$ only, wven if the input graph is bipartite. Moreover, we present an almost tight lower bound, excluding an algorithm with running time $f(k) \cdot r^{o(k / \log k)} \cdot n^{\O(1)}$ for any function $f$, under the ETH.
Finally, we show that for any $r \geq 3$, the  problem does not admit a kernel parameterized in $k$ (unless NP $\subseteq$ coNP / poly), even in the input graph is bipartite.

In Section \ref{sec:treewidth} we provide an algorithm solving the problem for graphs with bounded treewidth  and show that is it essentially optimal, under the SETH.
In Tables \ref{summary-complexities} and \ref{summary-algorithms} you can find a summary of the most important results of the paper.

The last section of the paper, Section \ref{sec:number}, is purely combinatorial.  We investigate the {\em fixing number} of $G$, i.e., the maximum (over all initial colorings $\varphi$) distance  from $\varphi$ to a proper coloring of $G$. We provide some combinatorial bounds and suggest directions for future research.

\begin{table}[h!]
\centering
\begin{tabular}{ccccc}
\toprule
  $r \leq 2$ & $r \geq 3$ & $k$ & $k+r$ &  $\tw + r$\\
\midrule
P (Prop~\ref{prop:easy})& NP-c (Th~\ref{thm:npc}) &  $W[1]$-h  (Th~\ref{thm:w1hard}) & FPT  (Th~\ref{fpt})  & FPT  (Th~\ref{thm:tw})\\
\bottomrule
\end{tabular}
\vspace{0.2cm}
\caption{The summary of parameterized complexity results for the \rfix{r} problem and different parameters.}
\label{summary-complexities}
\end{table}

\begin{table}[h!]
\centering
\begin{tabular}{l | ccc}
\toprule
 & $n$ & $k+r$ &  $\tw + r$\\
\midrule
upper bound & $2^n$ (Cor~\ref{cor:exact-algo})&  $(2(r-1))^k$ (Th~\ref{fpt-algo})& $r^t$ (Th~\ref{thm:tw})\\
lower bound & $2^{o(n)}$ (Cor~\ref{cor:exact-eth}) &  $r^{o(k / \log k)}$ (Cor~\ref{cor:almost-optimal-k})& $(r-\epsilon)^t$ (Cor~\ref{cor:tw-seth})\\
\bottomrule
\end{tabular}
\vspace{0.2cm}
\caption{The upper and lower complexity bounds for the problem under different parameterizations. We suppress polynomial factors. Lower bounds should be read: ``there is no algorithm with this complexity, unless the ETH/SETH fails''.}
\label{summary-algorithms}
\end{table}

\section{Preliminaries}
For a natural number $r$, by $[r]$ we denote the set $\{1,2,..,r\}$.
By an {\em $r$-coloring} of a graph $G$ we mean any assignment of natural numbers from $[r]$ (called {\em colors}) to vertices of $G$. A coloring is {\em proper} if no two adjacent vertices get the same color. Note that there may be some colors that are not assigned to any vertex.

\subsection{Considered problems}

For two $r$-colorings $\varphi,\varphi'$, let $\varphi \ominus \varphi'$ denote the set $\{v \in V \colon \varphi(v) \neq \varphi'(v)\}$.
We also define the {\em distance} $\dist(\varphi,\varphi')$ between two $r$-colorings $\varphi,\varphi'$ to be their Hamming distance, i.e. $|\varphi \ominus \varphi'|$.

The problem we consider in this paper is formally defined as follows.

\vskip 5pt
\noindent\begin{tabularx}{\textwidth}{| X |}
\hline
{\bf Problem: Color-Fixing ({\sc Fix})}\\
{\bf Instance:} A graph $G$, integer $k$, integer $r$, an $r$-coloring $\varphi$ of $V(G)$.\\
{\bf Question:} Does there exist a proper $r$-coloring $\varphi'$ of $G$ such that $\dist(\varphi,\varphi') \leq k$?\\
\hline
\end{tabularx}
\vskip 5pt

If $r$ is a fixed integer, we have the following version of the problem.

\vskip 5pt
\noindent\begin{tabularx}{\textwidth}{| X |}
\hline
{\bf Problem: $r$-Color-Fixing (\rfix{r})}\\
{\bf Instance:} A graph $G$, integer $k$, an $r$-coloring $\varphi$ of $V(G)$.\\
{\bf Question:} Does there exist a proper $r$-coloring $\varphi'$ of $G$ such that $\dist(\varphi,\varphi') \leq k$?\\
\hline
\end{tabularx}
\vskip 5pt

Such a coloring $\varphi'$ is called a {\em witness} of an instance $\I$ of \fix~(\rfix{r}, resp.).
Obviously, if $r < \chi(G)$ (by $\chi(G)$ we denote the {\em chromatic number} of $G$, i.e., the smallest number of colors needed to color $G$ properly), then the answer is always {\sc No}.
By a {\em recoloring} of a vertex $v$ we mean an operation of changing the color assigned to $v$, obtaining another coloring $\varphi'$ such that $\varphi \ominus \varphi' = \{v\}$.

In the optimization version of the problem we ask for the minimum number $k$ of recolorings needed to transform $\varphi$ into a proper coloring of $G$.
Let $\fixnumb{\varphi}{r}{G}$ denote this minimum possible value of $k$. If $r < \chi(G)$, we define $\fixnumb{\varphi}{r}{G} := \infty$ for every $r$-coloring $\varphi$ of $G$.

\subsection{Conflict graph}

For an improper coloring $\varphi$ of $G$, let $G^{\varphi}$ denote a {\em conflict graph} of $G$ under the coloring $\varphi$, i.e., the subgraph of $G$ induced by the set of edges $\{uv \in E(G) \colon \varphi(u) = \varphi(v) \}$. Note that the conflict graph can be found in polynomial time. The simple observation below will prove useful.

\begin{obser}\label{conflict}
To make the coloring $\varphi$ of $G$ proper, we need to recolor at least one endvertex of each edge of $G^{\varphi}$.
\end{obser}

Thus we observe that there is a close relation between fixing an improper coloring and finding a vertex cover in the conflict graph. This relation is explicitly described in the following lemma.

\begin{lem} \label{vc-reduction}
Let $G=(V,E)$ be a graph on $n$ vertices and let $k \leq n$. 
Let $\varphi$ be the coloring of $G$ such that $\varphi(v) = k+1$ for all $v \in V$.
Then $(G,k,k+1,\varphi)$ is a {\sc Yes}-instance of the \fix~problem if and only if $G$ has a vertex cover of size at most $k$.
\end{lem}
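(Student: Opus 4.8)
We want to show that the specific instance $(G,k,k+1,\varphi)$ with the constant coloring $\varphi \equiv k+1$ is a {\sc Yes}-instance precisely when $G$ has a vertex cover of size at most $k$. The whole construction is engineered so that the conflict graph equals $G$ itself: since $\varphi$ assigns the same color $k+1$ to every vertex, we have $\varphi(u)=\varphi(v)$ for every edge $uv$, so $G^\varphi = G$. This is the observation I would record first, because it is what lets Observation~\ref{conflict} do the heavy lifting in both directions.

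Let me think about what I'm actually proving here. The color palette is $[k+1]=\{1,\dots,k+1\}$, so we have exactly $k+1$ colors available. A witness $\varphi'$ is a proper $(k+1)$-coloring with $\dist(\varphi,\varphi')\le k$, i.e. $\varphi'$ differs from the all-$(k+1)$ coloring on at most $k$ vertices. The set $\varphi\ominus\varphi'$ of recolored vertices is exactly the set of vertices that $\varphi'$ does *not* color $k+1$. So a witness corresponds to: choose a set $S$ of at most $k$ vertices to recolor, leaving all of $V\setminus S$ colored $k+1$, and color $S$ using colors from $[k+1]$ so that the whole thing is proper. The key point is that the set $V\setminus S$ must be an independent set (any two of its vertices both carry color $k+1$, so they can't be adjacent), which means $S$ must be a vertex cover of $G$. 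That is the real content.

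**Plan of the proof.** I would prove the two directions separately.

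\medskip
\noindent\emph{($\Leftarrow$) Vertex cover gives a witness.} Suppose $C\subseteq V$ is a vertex cover with $|C|\le k$. Since $|C|\le k$, I can injectively assign to each vertex of $C$ a distinct color from $\{1,2,\dots,k\}\subseteq[k+1]$, leaving all other vertices colored $k+1$. Call this coloring $\varphi'$. Then $\varphi\ominus\varphi'\subseteq C$, so $\dist(\varphi,\varphi')\le|C|\le k$. It remains to check $\varphi'$ is proper. The vertices of $C$ all get pairwise distinct colors, so no edge inside $C$ is monochromatic; the vertices outside $C$ form an independent set (because $C$ is a vertex cover, every edge meets $C$), so there is no edge with both endpoints outside $C$; and any edge with one endpoint in $C$ and one outside has one endpoint colored $k+1$ and the other colored in $\{1,\dots,k\}$, hence differently colored. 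Thus $\varphi'$ is a proper $(k+1)$-coloring and a valid witness.

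\medskip
\noindent\emph{($\Rightarrow$) A witness gives a vertex cover.} Suppose $\varphi'$ is a proper $(k+1)$-coloring with $\dist(\varphi,\varphi')\le k$. Let $S=\varphi\ominus\varphi'$, so $|S|\le k$, and $\varphi'(v)=k+1$ for every $v\in V\setminus S$. I claim $S$ is a vertex cover. Indeed, take any edge $uv\in E$. Since $\varphi'$ is proper, $\varphi'(u)\ne\varphi'(v)$, so $u$ and $v$ cannot both have color $k+1$; hence at least one of them lies in $S$. Therefore every edge is covered by $S$, and $|S|\le k$.

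\medskip
**Anticipated obstacle.** Conceptually the argument is immediate once one sees $G^\varphi=G$; the only place that needs a moment's care is the forward direction ($\Leftarrow$), where one must confirm that $k$ distinct colors from the palette $\{1,\dots,k\}$ genuinely suffice to color the $\le k$ vertices of the cover without reusing color $k+1$, so that no new monochromatic edge is created *inside* the cover. This is exactly where the palette size $k+1$ (rather than something smaller) is used, and where the hypothesis $k\le n$ guarantees the instance is well-formed. I would make sure to state explicitly that the $k$ extra colors $1,\dots,k$ are used injectively on $C$, since otherwise two adjacent cover-vertices could collide. No genuinely hard step arises; the lemma is a clean encoding of \vc into \fix.
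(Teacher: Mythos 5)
Your proof is correct and takes essentially the same approach as the paper's: the recolored set $S=\varphi\ominus\varphi'$ covers every edge because a proper $\varphi'$ cannot leave both endpoints colored $k+1$, and conversely a cover of size at most $k$ is recolored injectively with the colors $1,\dots,k$. The only cosmetic differences are that you argue the covering property directly instead of citing Observation~\ref{conflict}, and you skip the paper's (unneeded) padding of the cover to size exactly $k$; note also that the paper's claim $G^\varphi=G$ technically requires its no-isolated-vertices assumption under the edge-induced definition of the conflict graph, but your argument never actually relies on that equality.
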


\begin{proof}
We can assume that $G$ has no isolated vertices, since removing them does not change the size of the minimum vertex cover. Define $r = k+1$. Observe that $G^{\varphi} = G$.

First suppose that $(G,k,r,\varphi)$ is a {\sc Yes}-instance of \fix~with a witness $\varphi'$. Let $S = \varphi \ominus \varphi'$, clearly $|S| \leq k$.
By Observation \ref{conflict}, $S$ contains at least one vertex from each edge of $G^{\varphi} = G$. Thus $S$ is a vertex cover of $G$, of size at most $k$ and thus $(G,k)$ is a {\sc Yes}-instance of \vc.

Now suppose that $(G,k)$ is a {\sc Yes}-instance of \vc~and let $S = \{v_1,v_2\ldots,v_k\}$ the the vertex cover of size $k$ (note that we can always add some vertices to a smaller vertex cover to obtain a vertex cover of size exactly $k$). Define a coloring $\varphi'$ of $G$ in the following way:
$$
\varphi'(v) = 
\begin{cases}
i & \text{ if } v = v_i \in S\\
r & \text{ if } v \notin S.
\end{cases}$$
Note that $|\varphi \ominus \varphi'| = |S| = k$ and $\varphi'$ is a proper $r$-coloring of $G$. This shows that $(G,k,r,\varphi)$ is a {\sc Yes}-instance of \fix.
\end{proof}

\subsection{Computational assumptions}

When proving hardness results or lower bounds for algorithms, we often use some additional assumptions. The standard assumption for distinguishing easy and hard problem is P $\neq$ NP. However, these assumptions is too weak to give us any meaningful insights into the possible complexity of an algorithm solving an NP-hard problem.
Thus researchers use stronger assumptions to investigate hard problems in more detail. Such an assumption, typically used for this purpose, is the so-called {\em Exponential Time Hypothesis} (usually referred to as the ETH), formulated by Impagliazzo and Paturi \cite{ImpagliazzoPaturi}. We refer the reader to the survey by Lokshtanov and Marx for more information about the ETH and conditional lower bounds \cite{LokshtanovMS11}.
The version we present below (and is most commonly used) is not the original statement of this hypothesis, but its weaker version (see also Impagliazzo, Paturi, and Zane \cite{ImpagliazzoPZ01}).

\begin{eth-env}[Impagliazzo and Paturi \cite{ImpagliazzoPaturi}]
There is no algorithm solving every instance of 3-{\sc Sat} with $N$ variables and $M$ clauses in time $2^{o(N+M)}$.
\end{eth-env}

A stronger complexity assumption is the so-called {\em Strong Exponential Time Hypothesis}, also introduced by Impagliazzo and Paturi \cite{ImpagliazzoPaturi}. The version we present below is again the consequence of the original statement. It is worth mentioning that the SETH is indeed a stronger version of the ETH, as the SETH implies the ETH.

\begin{seth-env}[Impagliazzo and Paturi \cite{ImpagliazzoPaturi}]
For any $\epsilon>0$, there is no algorithm solving every instance of {\sc Cnf-Sat} with $N$ variables in time $(2-\epsilon)^{N} \cdot N^{\O(1)}$.
\end{seth-env}

{\emph Fixed-parameter tractability} is a central notion of parameterized complexity. We say that a computational problem is fixed-parameter tractable (FPT), when parameterized by some parameter $k$, if it can be solved in time $f(k) \cdot n^{\O(1)}$, where $n$ is the size of the input instance and $f$ is some function. On the other hand, a problem is in XP complexity class, if it can be solved in time $n^{f(k)}$, so for every fixed $k$ the complexity of polynomial, but the degree of the polynomial function depends on $k$. To distinguish these two classes one usually uses the notion of $W[1]$-hardness (we refer the reader to \cite{book} for a formal definition). Thus the next complexity assumption we are going to use in Theorem \ref{thm:w1hard} is that FPT $\neq W[1]$, which implies that a $W[1]$-hard problem cannot be solved in FPT time.

The last assumption we use (in the proof of Theorem \ref{no-poly-kernel}) is NP $\not\subseteq$ coNP / poly. It can be seen as a stronger variant of NP $\neq$ coNP, which in turn implies P $\neq$ NP. It is worth mentioning that NP $\subseteq$ coNP / poly would imply the collapse of the polynomial hierarchy to its third level.
This assumption is an important part of the framework used for showing the non-existence of polynomial kernels for parameterized problems, introduced by Bodlaender, Jansen, and Kratsch \cite{BJK}. The version of the framework presented below is a special case of the original one, yet it is strong enough for our purpose.

Let $\Pi$ be a graph problem, whose instance is a graph $G$. Let $\Pi^*$ be a parameterized problem, whose instance is $(G,k)$, where $G$ is a graph and $k$ is an integer (parameter). We say that $\Pi$ {\em cross-composes} into $\Pi^*$ if there exists an algorithm, which, given $t$ instances $G_1,G_2,\ldots,G_t$ of $\Pi$, works in time polynomial in $\sum_{i=1}^t |G_i|$, and produces an instance $(G^*,k^*)$ of $\Pi^*$ such that:
\begin{itemize}
\item $(G^*,k^*)$ is a {\sc Yes}-instance of $\Pi^*$ if and only if there exists $i \in [t]$ such that $G_i$ is a  {\sc Yes}-instance of $\Pi$,
\item $k^* \leq p(\max _{i=1}^t |V(G_i)| + \log t)$, where $p$ is some polynomial function.
\end{itemize}

Intuitively, we encode many (exactly $t$) instances of the problem $\Pi$ into one instance of $\Pi^*$ such that the size of the parameter $k^*$ is relatively small. Note that the size of $G^*$ can be huge. The theorem below shows that such a cross-composition can be used to refute the existence of a polynomial kernel for $\Pi^*$.

\begin{thm}[Bodlaender, Jansen, and Kratsch \cite{BJK}] \label{thm-no-kernel-framework}
Let $\Pi$ be an {\em NP}-hard problem and let $\Pi^*$ be a parameterized problem such that $\Pi$ cross-composes into $\Pi^*$. If $\Pi^*$ has a polynomial kernel, then {\em NP} $\subseteq$ {\em coNP / poly}.
\end{thm}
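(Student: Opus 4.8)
The plan is to turn a hypothetical kernel into a forbidden compression. Recall that an \emph{OR-distillation} of a language $L$ into a language $R$ is a polynomial-time algorithm that, given inputs $x_1,\ldots,x_t$, outputs a single instance $y$ with $|y| \le \mathrm{poly}(\max_i |x_i|)$ (crucially, independent of $t$) and $y \in R$ if and only if $x_i \in L$ for some $i$. The key external fact I would invoke is the lower bound of Fortnow and Santhanam: if an NP-hard language admits an OR-distillation into some $R \in$ NP, then NP $\subseteq$ coNP / poly. It therefore suffices to show that a polynomial kernel for $\Pi^*$, combined with the assumed cross-composition, produces an OR-distillation of $\Pi$ into the classical (NP) version of $\Pi^*$; since $\Pi$ is NP-hard, the desired collapse follows.

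First I would assemble the distillation from the two given objects. Let $G_1,\ldots,G_t$ be instances of $\Pi$ and set $s = \max_i |V(G_i)|$. Before anything else I would shrink $t$: there are at most $2^{\O(s^2)}$ distinct graphs on at most $s$ vertices, so if $t$ exceeds this bound I discard duplicate instances, which leaves the OR of ``$G_i$ is a {\sc Yes}-instance'' unchanged. Hence we may assume $t \le 2^{\O(s^2)}$, and therefore $\log t = \O(s^2)$. Now apply the cross-composition to obtain $(G^*,k^*)$ with $k^* \le p(s + \log t) = \mathrm{poly}(s)$, and then run the polynomial kernelization on $(G^*,k^*)$ to obtain an equivalent instance $(G',k')$ of $\Pi^*$ with $|G'| + k' \le q(k^*) = \mathrm{poly}(s)$. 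The composed map $(G_1,\ldots,G_t) \mapsto (G',k')$ runs in time polynomial in $\sum_i |G_i|$ (each stage is polynomial in the size of its own input, and $G^*$, although possibly huge, is produced in polynomial time and hence has polynomially bounded size), its output has size polynomial in $s$ and independent of $t$, and it is a {\sc Yes}-instance exactly when some $G_i$ is. This is precisely an OR-distillation of $\Pi$ into the classical decision version of $\Pi^*$, which (as is standard in this framework, and certainly for the colouring problems studied here) lies in NP.

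The genuinely hard ingredient is the Fortnow--Santhanam distillation lower bound itself, which I would treat as a black box rather than reprove: its proof is a delicate counting/compression argument showing that such a distillation would let one certify \emph{non}-membership in $L$ using only polynomial advice and nondeterminism, forcing the polynomial hierarchy to collapse. Within the self-contained part of the argument, the only real subtlety to police is the arithmetic of the size bounds, namely that the additive $\log t$ term produced by the cross-composition is absorbed into $\mathrm{poly}(s)$ (this is exactly what the deduplication step guarantees) and that the kernelization, although it is fed the potentially enormous instance $G^*$, still runs within a polynomial of the total input length because $G^*$ was itself generated in polynomial time.
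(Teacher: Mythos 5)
There is nothing in the paper to compare against: Theorem~\ref{thm-no-kernel-framework} is imported verbatim as a black-box result of Bodlaender, Jansen, and Kratsch \cite{BJK}, and the paper contains no proof of it. Your reconstruction is, in essence, the original argument from \cite{BJK}: compose the cross-composition with the hypothetical polynomial kernel to obtain an OR-distillation of the NP-hard problem $\Pi$ into (the classical version of) $\Pi^*$, then invoke the Fortnow--Santhanam distillation lower bound as a black box. The two subtleties you police are exactly the right ones. The additive $\log t$ term in the bound $k^* \leq p(\max_i |V(G_i)| + \log t)$ is absorbed by deduplication: since there are at most $2^{\mathrm{poly}(s)}$ distinct instances of size at most $s$, one may assume $\log t \leq \mathrm{poly}(s)$, which is precisely how \cite{BJK} handle it (in their general framework the polynomial equivalence relation on instances plays the role of your duplicate-removal step; the simplified statement in this paper, where instances are plain graphs, makes literal deduplication suffice). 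And the kernelization, fed the possibly huge $G^*$, still runs in time polynomial in $\sum_i |G_i|$ because $G^*$ was produced in polynomial time and hence has polynomially bounded size. One minor remark: your hypothesis that the target language $R$ of the distillation lie in NP is not actually needed --- the Fortnow--Santhanam argument works for arbitrary $R$, since the polynomial advice consists of distillation outputs known to lie \emph{outside} $R$ and membership in $R$ is never verified nondeterministically --- but assuming it is harmless here. Modulo treating Fortnow--Santhanam as a black box, which is the standard and reasonable division of labor, your proof is correct and is the same proof as in the cited source.
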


We refer the reader to the handbook \cite[Chapters 14 and 15]{book} for more information about complexity assumptions.

\section{Classical complexity results} \label{sec:complexity}

In this section we show that the \rfix{r} problem is NP-complete for all $r \geq 3$, even for restricted input graphs. Furthermore, we show how to adapt the known exact algorithm for computing partitions of graphs to solve \rfix{r}.

\subsection{Computational hardness of the problem}
First, observe that for $r=1$ and $r=2$ the problem is easy. 

\begin{prop}\label{prop:easy}
The \rfix{r} problem is polynomially solvable for $r \leq 2$.
\end{prop}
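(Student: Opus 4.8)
The plan is to handle the two trivial sub-cases $r=1$ and $r=2$ separately, since they reduce to well-understood polynomial problems. First I would dispose of $r=1$: with only one available color, a proper coloring exists if and only if $G$ has no edges. So if $G$ has any edge, the instance is immediately a \textsc{No}-instance (regardless of $k$), and if $G$ is edgeless, any coloring is already proper and the answer is \textsc{Yes} whenever $k \geq 0$. This case requires no real work beyond inspecting $E(G)$, which is done in polynomial time.

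The substantive case is $r=2$. Here the key observation is that a graph admits a proper $2$-coloring if and only if it is bipartite, and this can be checked in linear time (e.g.\ by BFS/DFS, computing connected components and their bipartition classes). So first I would verify that $G$ is bipartite; if not, the answer is \textsc{No}. Assuming $G$ is bipartite, the question becomes: over all proper $2$-colorings $\varphi'$ of $G$, what is the minimum of $\dist(\varphi,\varphi')$, and is it at most $k$? The crucial structural fact is that once we fix the color of a single vertex in a connected bipartite graph, the entire proper $2$-coloring of that component is determined (there are exactly two proper $2$-colorings per connected component, and they are color-swaps of each other). Therefore I would process each connected component $C$ independently: compute its (unique up to swap) bipartition into classes $A_C, B_C$, and for each of the two proper colorings count how many vertices of $C$ disagree with $\varphi$. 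Taking the smaller of the two counts gives the optimal number of recolorings forced within $C$, and summing these minima over all components yields $\fixnumb{\varphi}{2}{G}$. The answer to the decision problem is then \textsc{Yes} if and only if this total is at most $k$.

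The correctness of this per-component minimization rests on the fact that proper $2$-colorings of different connected components can be chosen completely independently of one another, so the global minimum distance is the sum of the per-component minimum distances; this is straightforward to justify since $\dist(\varphi,\varphi') = \sum_C |\{v \in C : \varphi(v)\neq\varphi'(v)\}|$ decomposes additively over components and the choices are unconstrained across components. Each of these steps---finding connected components, testing bipartiteness, and counting disagreements---runs in linear time, so the whole procedure is polynomial (indeed linear).

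I do not anticipate a genuine obstacle here; the statement is a warm-up separating the easy regime from the hard $r\geq 3$ regime established later in Theorem~\ref{thm:npc}. The only point demanding a little care is the argument that components may be colored independently and that exactly two proper colorings exist per connected bipartite component---this is where one must be slightly explicit to make the minimization provably optimal rather than merely heuristic.
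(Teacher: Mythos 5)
Your proposal is correct and follows essentially the same route as the paper: the $r=1$ case reduces to checking edgelessness, and the $r=2$ case uses exactly the paper's Proposition~\ref{bipartite}, namely that each connected bipartite component has a unique proper $2$-coloring up to swapping the two colors, so one sums the per-component minima of the two disagreement counts. Nothing is missing; the per-component independence you flag as the delicate point is also the crux of the paper's argument.
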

Indeed, if $r=1$, the problem clearly reduces to determining if the graph has no edges.
For $r=2$ the problem is also polynomial time solvable. If $G$ is not bipartite, the answer is {\sc No}. If $G$ is bipartite, the answer follows from the proposition below.

\begin{prop} \label{bipartite}
Let $G$ be a bipartite graph with bipartition classes $X$ and $Y$ and let $\varphi$ be a 2-coloring of $G$.
Then we have
$$\fixnumb{\varphi}{2}{G} = \sum_{ \substack{ C \colon \text{connected} \\ \text{component of }G}} 
\min \{ |\left ( X \ominus \varphi^{-1}(1) \right ) \cap V(C) |,| \left ( X \ominus \varphi^{-1}(2) \right ) \cap V(C)| \}.$$
\end{prop}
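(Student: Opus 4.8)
The plan is to reduce the computation of $\fixnumb{\varphi}{2}{G}$ to independent, per-component choices between the two proper $2$-colorings forced by the bipartition, and then to identify each per-component cost with one of the two symmetric-difference terms appearing in the formula.

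First I would recall the standard fact that a connected bipartite graph admits exactly two proper $2$-colorings: since the bipartition $(X\cap V(C),\, Y\cap V(C))$ of a connected component $C$ is unique, any proper coloring must assign one color to one side and the other color to the other side. Call these colorings $\alpha_C$, which sends $X\cap V(C)$ to $1$ and $Y\cap V(C)$ to $2$, and $\beta_C$, the swap. Because the connected components partition $V(G)$ and there are no edges between distinct components, a coloring of $G$ is proper if and only if its restriction to each component $C$ equals $\alpha_C$ or $\beta_C$, and these choices can be made independently. Since $\dist(\varphi,\varphi')=|\varphi\ominus\varphi'|$ is additive over the components (each vertex lies in exactly one component), minimizing the total distance amounts to minimizing the cost on each component separately and summing. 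This yields
$$\fixnumb{\varphi}{2}{G}=\sum_{C}\min\{\,\dist_C(\varphi,\alpha_C),\ \dist_C(\varphi,\beta_C)\,\},$$
where $\dist_C$ counts only disagreements within $V(C)$.

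The remaining step, which is really just bookkeeping with the symmetric difference, is to verify that $\dist_C(\varphi,\alpha_C)=|(X\ominus\varphi^{-1}(1))\cap V(C)|$ and $\dist_C(\varphi,\beta_C)=|(X\ominus\varphi^{-1}(2))\cap V(C)|$. For $\alpha_C$, a vertex $v\in V(C)$ is recolored precisely when $v\in X$ but $\varphi(v)\neq 1$ (i.e.\ $v\in X\setminus\varphi^{-1}(1)$), or when $v\notin X$ but $\varphi(v)=1$ (i.e.\ $v\in\varphi^{-1}(1)\setminus X$, using $V=X\cup Y$); the union of these two disjoint sets is exactly $X\ominus\varphi^{-1}(1)$, intersected with $V(C)$. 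The identity for $\beta_C$ follows in the same way with the roles of colors $1$ and $2$ interchanged, using that $\varphi$ is a $2$-coloring so that $V\setminus\varphi^{-1}(1)=\varphi^{-1}(2)$. Substituting these two equalities into the displayed sum gives the claimed formula.

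I do not expect a genuine obstacle here; the only points requiring care are the per-component dichotomy of proper $2$-colorings, so that we are optimizing over \emph{all} proper colorings and not missing any, and the correct translation between the number of recolored vertices and the symmetric-difference sets. Isolated vertices need no special treatment: such a singleton component contributes $0$ to whichever of $\alpha_C,\beta_C$ already agrees with $\varphi$, so its minimum term is $0$, in agreement with the formula.
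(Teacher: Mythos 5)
Your proof is correct and follows essentially the same route as the paper's: both decompose $G$ into connected components, use the fact that each component admits exactly two proper $2$-colorings (the bipartition and its swap), and identify the per-component recoloring cost with the symmetric-difference terms $\left|\left(X \ominus \varphi^{-1}(i)\right) \cap V(C)\right|$. Your write-up is in fact slightly more careful than the paper's, which loosely says a connected bipartite graph has a ``unique'' $2$-coloring where it means unique up to swapping the two colors.
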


\begin{proof}
The claim follows from the observation that a connected bipartite graph has a unique 2-coloring.
Let $C$ be a connected component of $G$ and let $X',Y'$ denote its classes of bipartition. By $\varphi'$ we denote the restriction of $\varphi$ to $C$. To obtain a proper coloring of $C$, we either have to recolor the vertices from $X' \setminus \varphi'^{-1}(1)$ to color 1 and vertices from $Y' \setminus \varphi^{-1}(2)$ to color 2, or the other way around. Therefore the minimum number of recoloring operations needed to obtain a proper coloring of $C$ is equal to $\min \{ |X'  \ominus \varphi'^{-1}(1)|,|X' \ominus \varphi'^{-1}(2)| \}$. Clearly $X' \ominus \varphi'^{-1}(1) = \left ( X \cap V(C) \right ) \ominus \left( \varphi^{-1}(1) \cap V(C) \right) = \left(X \ominus \varphi^{-1}(1) \right) \cap V(C)$ (and symmetrically for $\varphi^{-1}(2)$). We repeat this for every connected component $C$ of $G$.
\end{proof}

Clearly, the \fix problem is in NP. Also, a graph $G$ with $n$ vertices is $r$-colorable if and only if $G$ can be recolored from any fixed coloring within at most $n$ steps.
Thus the \rfix{r} problem is NP-complete for any $r \geq 3$ (when the number $k$ of allowed recolorings is a part of the input). From this simple reduction, we obtain the following lemma, that will be useful later.

\begin{lem} \label{lem:equivalent}
If there exists an algorithm solving the \rfix{r} problem on a graph $G$ in time $f(G)$, then the \rcoloring{r} problem on $G$ can be solved in time $f(G) \cdot n^{\mathcal{O}(1)}$, where $n$ is the number of vertices of $G$.
\end{lem}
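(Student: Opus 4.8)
The plan is to give a direct, single-call reduction from \rcoloring{r} to \rfix{r}. Given an instance $G$ of \rcoloring{r} on $n$ vertices, I would first fix an arbitrary $r$-coloring $\varphi$ of $G$ — for concreteness, the constant coloring assigning color $1$ to every vertex, which can be written down in $n^{\O(1)}$ time. I would then invoke the assumed \rfix{r} algorithm on the instance $(G, n, \varphi)$, that is, set the recoloring budget to $k = n$, and return its answer verbatim.

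The correctness rests on the observation already recorded just above the statement: since any two $r$-colorings of an $n$-vertex graph can differ on at most $n$ vertices, we always have $\dist(\varphi, \varphi') \leq n$ for every proper $r$-coloring $\varphi'$. Consequently the constraint $\dist(\varphi, \varphi') \leq k$ with $k = n$ is vacuously satisfied, and the question decided by \rfix{r} on $(G, n, \varphi)$ collapses to ``does there exist a proper $r$-coloring $\varphi'$ of $G$ at all?'' — which is precisely the \rcoloring{r} question. Hence $(G, n, \varphi)$ is a {\sc Yes}-instance of \rfix{r} if and only if $G$ is $r$-colorable, and the output of the oracle is the correct answer for \rcoloring{r}.

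For the running time, constructing $\varphi$ and interpreting the returned answer cost $n^{\O(1)}$, while the single call to the \rfix{r} algorithm costs $f(G)$; using $f(G) \geq 1$, the total is bounded by $f(G) \cdot n^{\O(1)}$, as claimed. I do not expect a genuine obstacle here: the only point requiring a moment's care is to confirm that setting the budget $k$ equal to $n$ renders the distance constraint trivially true, so that the reduction is insensitive to the choice of starting coloring — any fixed $\varphi$ serves equally well, and no proper coloring is ruled out by the budget.
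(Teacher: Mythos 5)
Your proposal is correct and matches the paper's own argument: the lemma is derived from exactly the observation you cite, namely that an $n$-vertex graph is $r$-colorable if and only if it can be recolored from any fixed coloring within at most $n$ steps, so a single call to the \rfix{r} algorithm with budget $k=n$ and an arbitrary starting coloring decides \rcoloring{r}. No gaps; the running-time accounting is also as in the paper.
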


To derive slightly stronger complexity results, consider the problem \preext{r} defined as follows: 
\vskip 5pt
\noindent\begin{tabularx}{\textwidth}{| X |}
\hline
{\bf Problem: $r$-Precolor-Extension (\preext{r})}\\
{\bf Instance:} A graph $G$, a set $U \subseteq V(G)$ and a proper $r$-coloring $\varphi_U$ of $G[U]$.\\
{\bf Question:} Does there exist a proper $r$-coloring $\varphi$ of $G$ such that $\varphi(u) = \varphi_U(u)$ for all $u\in U$?\\
\hline
\end{tabularx}
\vskip 5pt

As shown by Kratochv\'il \cite{Kratochvil1993}, \preext{3} is NP-complete, even if the input graph is planar and bipartite.

Let $(G, U, \varphi_U)$ be an instance of \preext{3}, where $G$ has $n$ vertices. Let  $H$ and $\varphi$ be the graph and its coloring defined as follows. First, pick a copy of $G$ and color its vertices with color $1$. For each $v\in V(G)$, add $(r-3)$ groups of $(n+1)$ pending vertices,  where the $i$-th group is colored with the color $3+i$. Finally, for each node $u\in U$, add two more groups of $(n+1)$ pending vertices, each group colored with a different color from $\{1,2,3\} \setminus \{ \varphi_U(u)\}$. 

\begin{lem} \label{rpreext-reduction}
Fix $r \geq 3$. Let $(G, S, \varphi_U)$ be an instance of \preext{3}, and $H=H(G)$, $\varphi = \varphi(\varphi_U)$ be defined as above. Then $(G, U, \varphi_U)$  is a {\sc Yes}-instance of \preext{3} if and only if $(H, n, \varphi)$ is a {\sc Yes}-instance of \rfix{r}.
\end{lem}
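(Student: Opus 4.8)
The plan is to prove both implications by exploiting a single color-locking gadget: a bundle of $n+1$ pendant vertices all precolored with the same color $c$ and attached to a vertex $w$ forces $\varphi'(w) \neq c$ in every witness $\varphi'$ of $(H,n,\varphi)$. Indeed, since the budget is $k = n$, at most $n$ of these $n+1$ pendants can be recolored, so at least one of them still carries color $c$ in $\varphi'$; as $\varphi'$ is proper and that pendant is adjacent to $w$, we must have $\varphi'(w) \neq c$. I would isolate and prove this statement first, since both directions rest on it.

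Applying the gadget to the construction yields the constraints that drive the equivalence. For each $v \in V(G)$ the $(r-3)$ bundles forbid the colors $4,\ldots,r$, hence $\varphi'(v) \in \{1,2,3\}$; and for each $u \in U$ the two additional bundles forbid both colors of $\{1,2,3\}\setminus\{\varphi_U(u)\}$, so together with the previous constraint we obtain $\varphi'(u) = \varphi_U(u)$. This is exactly the step that encodes the precoloring. The direction ($\Leftarrow$) is then immediate: given any witness $\varphi'$ of $(H,n,\varphi)$, these two facts show that $\varphi'$ restricted to the copy of $G$ uses only colors $\{1,2,3\}$ and agrees with $\varphi_U$ on $U$; since $G$ sits as an induced subgraph of $H$ and $\varphi'$ is proper, this restriction is a proper $3$-coloring of $G$ extending $\varphi_U$, so $(G,U,\varphi_U)$ is a {\sc Yes}-instance of \preext{3}.

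For the direction ($\Rightarrow$), I would start from a proper $3$-coloring $\psi$ of $G$ extending $\varphi_U$ and define $\varphi'$ to equal $\psi$ on the copy of $G$ and to coincide with $\varphi$ on every pendant vertex. Properness is checked edge by edge: the $G$-edges are fine because $\psi$ is proper; an edge from $v$ to a pendant colored $3+i \geq 4$ is fine because $\psi(v) \leq 3$; and an edge from $u \in U$ to one of its two special pendants is fine because those pendants avoid $\varphi_U(u) = \psi(u)$. Since only the (at most $n$) vertices of the copy of $G$ change color and all pendants are left untouched, we get $\dist(\varphi,\varphi') \leq n = k$, so $(H,n,\varphi)$ is a {\sc Yes}-instance of \rfix{r}.

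The main obstacle is making the budget accounting tight on both sides simultaneously: the ($\Leftarrow$) direction needs $n < n+1$ so that no single bundle can be fully recolored, while the ($\Rightarrow$) direction needs recoloring the copy of $G$ alone to stay within budget, which forces every pendant to remain at its original color. Verifying that leaving all pendants unchanged is consistent with properness---in particular that the special bundles for $u \in U$ were colored precisely with $\{1,2,3\}\setminus\{\varphi_U(u)\}$, and that the $(r-3)$ bundles cover exactly the surplus colors $4,\ldots,r$---is the one place where the construction must be inspected carefully.
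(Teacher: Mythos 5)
Your proposal is correct and follows essentially the same route as the paper: the same pigeonhole argument on the $(n+1)$-pendant bundles (which the paper records separately as Observation~\ref{obs-list}) to forbid colors $4,\ldots,r$ everywhere and to pin each $u \in U$ to $\varphi_U(u)$, and the same witness construction in the forward direction (copy of $G$ recolored by the extension, all pendants untouched, giving distance at most $n$). No gaps; your budget accounting and edge-by-edge properness check match the paper's proof.
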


\begin{proof}
First suppose that $(G,U,\varphi_U)$ is a {\sc Yes}-instance of \preext{3} with a witness $\varphi'$, i.e., $\varphi'$ is a proper $3$-coloring of $G$ (with colors  $\{1,2,3\}$) that extends $\varphi_U$. Let $\varphi''$ be the coloring of $H$ where the vertices in the copy of $G$ in $H$ are colored according to $\varphi'$, and the other vertices of $H$ are colored according to $\varphi$. Notice that $|\varphi'' \ominus \varphi| \leq n$, and $\varphi''$ is a proper coloring of $H$. Indeed, by definition of $\varphi'$ no obstruction can exist between nodes in $G$, and by definition of $\varphi$ no obstruction can exist between a vertex and their pending vertices, since the pending vertices are colored with colors $\{4, \dots, r\}$ if $v \in V(G) \setminus U$ and $[r] \setminus \varphi_U(v)$ if $v \in U$.

Conversely, suppose that $(H,n,\varphi)$ is a {\sc Yes}-instance of \rfix{r} and $\varphi''$ is a proper $r$-coloring of $H$ such that $|\varphi'' \ominus \varphi| \leq n$. Recall that in the coloring $\varphi$, each vertex of the copy of $G$ in $H$ has $n+1$ neighbors in color $i$ for each $i \in \{4, \dots, r\}$. Thus $\varphi''$ must use only colors $\{1,2,3\}$ on those vertices. Moreover, the vertices $u$ in the copy of $U$ in $H$ have $n+1$ neighbors in color $i$ for each $i \in [r] \setminus \{\varphi_U(u)\}$. Then the restriction of $\varphi''$ to $V(G)$ is a $3$-coloring of $G$ that satisfies $\varphi''(u) = \varphi_U(u)$ for each node $u\in U$.
\end{proof}

Since the described construction preserves both the planarity and the bipartiteness, we obtain the following.
\begin{thm} \label{thm:npc}
The \rfix{r}~problem is {\em NP}-complete for any $r\geq 3$, even if the input graph is planar and bipartite.
\end{thm}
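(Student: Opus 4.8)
The plan is to combine the two reductions already established in the excerpt. By Lemma~\ref{rpreext-reduction}, for every fixed $r \geq 3$ there is a polynomial-time reduction from \preext{3} to \rfix{r}: an instance $(G,U,\varphi_U)$ of \preext{3} maps to the instance $(H,n,\varphi)$ of \rfix{r}, and this is a valid equivalence of \textsc{Yes}-instances. Since \rfix{r} is clearly in NP (a witness is a proper $r$-coloring $\varphi'$ with $\dist(\varphi,\varphi')\leq k$, which can be verified in polynomial time), it remains only to invoke the hardness of the source problem and to check that the construction respects the two structural restrictions claimed in the statement.

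First I would observe that \preext{3} is NP-complete even when the input graph $G$ is planar and bipartite, which is exactly Kratochv\'il's result \cite{Kratochvil1993} cited just before Lemma~\ref{rpreext-reduction}. Starting from such a restricted instance, the reduction of Lemma~\ref{rpreext-reduction} is correct, so to conclude NP-hardness of \rfix{r} on planar bipartite graphs it suffices to argue that the graph $H=H(G)$ produced by the construction is itself planar and bipartite whenever $G$ is.

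The key step is therefore this structural verification, and it is the only part requiring genuine (though light) argument. The graph $H$ is obtained from $G$ by attaching, to each vertex $v$, several \emph{groups of pending vertices} — that is, leaves adjacent only to $v$. Attaching pendant vertices (degree-one vertices) to a graph preserves planarity, since a leaf can always be drawn in a face incident to its unique neighbor without introducing crossings. It also preserves bipartiteness: a pendant vertex can be placed in the bipartition class opposite to that of its neighbor, and since leaves lie on no cycle they create no odd cycle. As $G$ is planar and bipartite by hypothesis, $H$ inherits both properties, and the choice of $k=n$ in the target instance is irrelevant to these structural features.

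Finally I would assemble the pieces: membership in NP together with a polynomial-time reduction from an NP-hard problem (\preext{3} restricted to planar bipartite graphs) yields NP-completeness of \rfix{r} for each fixed $r \geq 3$, with the reduction preserving planarity and bipartiteness. The main (and essentially only) obstacle is making sure the pendant-vertex construction genuinely preserves both restrictions simultaneously; everything else is an immediate consequence of Lemma~\ref{rpreext-reduction} and the cited hardness of \preext{3}. I would phrase the conclusion as a one-sentence corollary of these observations rather than reprove the reduction correctness, since that correctness is already Lemma~\ref{rpreext-reduction}.
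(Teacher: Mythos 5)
Your proposal is correct and follows the paper's own route exactly: the paper derives Theorem~\ref{thm:npc} from Lemma~\ref{rpreext-reduction} together with Kratochv\'il's NP-completeness of \preext{3} on planar bipartite graphs, noting (in a single sentence, which you usefully spell out) that attaching pendant vertices preserves both planarity and bipartiteness. Nothing is missing; your explicit verification of the structural preservation is merely a more detailed rendering of the same argument.
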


This shows that fixing a given coloring remains hard, even if the number of available colors is much bigger that the chromatic number of the input graph.

The construction of the graph $G$ was based on the following observation.
\begin{obser}\label{obs-list}
Let $(G,k,\varphi)$ be a {\sc Yes}-instance of \fix and let $\varphi'$ be a witness.
If a vertex $v$ of $G$ has $k+1$ private neighbors in color $c$, then $\varphi'(v) \neq c$.
\end{obser}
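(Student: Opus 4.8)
The plan is to prove the statement by contradiction, via a direct counting argument against the recoloring budget $k$. Assume toward a contradiction that the witness $\varphi'$ satisfies $\varphi'(v) = c$, and let $u_1, \dots, u_{k+1}$ denote the $k+1$ private neighbors of $v$ that are colored $c$ under $\varphi$, so that $\varphi(u_i) = c$ for every $i \in [k+1]$.

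First I would use properness of the witness at the vertex $v$. Each $u_i$ is adjacent to $v$, and since $\varphi'$ is a proper $r$-coloring with $\varphi'(v) = c$, properness forces $\varphi'(u_i) \neq c$ for every $i$. Combined with $\varphi(u_i) = c$, this yields $\varphi'(u_i) \neq \varphi(u_i)$, i.e.\ every $u_i$ lies in the set $\varphi \ominus \varphi'$ of recolored vertices.

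Next I would invoke distinctness and the distance bound. The vertices $u_1, \dots, u_{k+1}$ are pairwise distinct, so $|\varphi \ominus \varphi'| \geq k+1$. This contradicts the hypothesis that $\varphi'$ is a witness of a {\sc Yes}-instance of \fix, for which $\dist(\varphi,\varphi') = |\varphi \ominus \varphi'| \leq k$. Hence $\varphi'(v) \neq c$, as claimed.

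I would stress that this is essentially a pigeonhole observation and presents no real obstacle: retaining color $c$ at $v$ would already force the recoloring of all $k+1$ of its $c$-colored neighbors, exhausting more than the $k$ permitted transformations. The only property of the private neighbors used in the argument is that they supply $k+1$ distinct vertices adjacent to $v$ and colored $c$ under $\varphi$; privateness itself is not exploited here (it is relevant in the surrounding construction, where it guarantees that the pending vertices impose no further coloring constraints).
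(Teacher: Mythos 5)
Your proof is correct and is exactly the intended argument: the paper states this as an observation without proof, treating the pigeonhole count (if $\varphi'(v)=c$, all $k+1$ neighbors colored $c$ must be recolored, exceeding the budget $k$) as immediate. Your remark that privateness is not needed for the observation itself, only for the surrounding gadget constructions, is also accurate.
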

This is equivalent to introducing lists of allowed colors. Thus for simplicity we will sometimes consider a generalization of \fix, where every vertex $v$ of the input graph has a list $L(v)$ of possible colors. We ask if $k$ recoloring operations are enough to transform a given initial coloring $\varphi$ into a proper \emph{list} coloring with lists $L$. We will denote this problem by \listfix.

\subsection{Exact algorithm for the \fix~problem} \label{sec:exact}

In this section we deal with the optimization version of the \fix~problem.
Note that the brute force algorithm works in time $(\sum_{k=0}^n \binom{n}{k} (r-1)^k) \cdot n^{\O(1)} = r^n \cdot n^{\O(1)}$. We shall obtain a better algorithm by reducing the instance of our problem to an instance of the so-called {\sc Max Weighted Partition} problem and then solve it, using the algorithm by Bj\"orklund, Husfeldt and Koivisto \cite{BHK}.
A {\em partition} of the set $N$ is a family of sets $S_1,\ldots,S_r$ such that $\bigcup_{i=1}^r S_i = N$ and $S_i \cap S_j = \emptyset$ for every $i \neq j$. Notice that we do not require for the sets $S_i$ to be non-empty.

\vskip 5pt
\noindent\begin{tabularx}{\textwidth}{| X |}
\hline
{\bf Problem: {\sc Max Weighted Partition}}\\
{\bf Instance:} A set $N$, integer $d$ and functions $f_1,f_2,\ldots,f_d \colon 2^N \to [-M,M]$  \\
for some integer $M$.\\
{\bf Question:} What is the maximum $w$, for which there exists a partition $S_1,S_2,\ldots,S_d$ such that $\sum_{i=1}^d f_i(S_i) = w$?\\
\hline
\end{tabularx}
\vskip 5pt

Let $G$ be a graph and let $\varphi$ be its $r$-coloring.
We shall construct a corresponding instance $\J = (N,d,f_1,\ldots,f_d)$ of {\sc Max Weighted Partition} problem.
Set $N = V(G)$ and $d=r$. We define functions $f_1,f_2,\ldots,f_d$ as:

$f_i(S) = \begin{cases} - |S \setminus \varphi^{-1}(i)| & \textrm{if $S$ is independent,} \\
						- r \cdot n & \textrm{otherwise.} \end{cases}$

In this way every partition of $V(G)$ into $r$ independent set, corresponding to the proper $r$-coloring $\varphi'$, has the total weight $\left(- \sum_{i=1}^r |\varphi'^{-1}(i) \setminus \varphi^{-1}(i)|\right)$. It is also easy to notice that any partition into independent sets has greater weight than any partition having at least one non-independent set.

The only thing left is to prove that the weight is maximized for a partition corresponding to a coloring $\varphi'$ such that $\dist(\varphi',\varphi)$ is minimum. To see this, notice that $\dist(\varphi',\varphi) = |\{v \in V \colon \varphi'(v) \neq \varphi(v)\}| = |\bigcup_{i=1}^r \{v \in V \colon \varphi'(v)=i \land \varphi(v) \neq i\}| = \sum_{i=1}^r|\{v \in V \colon \varphi'(v)=i \land \varphi(v) \neq i\}| = \sum_{i=1}^r|\varphi'^{-1}(i) \setminus \varphi^{-1}(i)|$.
Moreover, if a weight of a found partition is at most $-r \cdot n$, it contains at least one non-independent set, which means that $r < \chi(G)$ and therefore $\fixnumb{\varphi}{r}{G} = \infty$.

Now we can use the algorithm by Bj\"orklund {\em et al.} to find the optimal solution for $\J$.

\begin{thm}[Bj\"orklund, Husfeldt, Koivisto \cite{BHK}]
{\sc Max Weighted Partition} problem can be solved in time:
\begin{itemize}
\item $2^n d^2 M \cdot n^{\O(1)}$, using exponential space,
\item $3^n d^2 M\cdot n^{\O(1)}$, using polynomial space,
\end{itemize} where $n$ is the cardinality of the ground set.
\end{thm}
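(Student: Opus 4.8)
The plan is to reconstruct the inclusion--exclusion and fast subset convolution argument of Bj\"orklund, Husfeldt and Koivisto. The first step is to convert the $(\max,+)$ optimization into an evaluation problem over an ordinary commutative ring, so that the zeta/M\"obius machinery (which inherently uses subtraction) becomes applicable. I would encode each weight as the exponent of a formal variable: replace $f_i(S)$ by the monomial $g_i(S) = z^{\,f_i(S)+M}$, where the shift $+M$ keeps all exponents in $\{0,\dots,2M\}$. Under this encoding the weight $\sum_i f_i(S_i)$ of an ordered partition $(S_1,\dots,S_d)$ of $N$ becomes the exponent of $\prod_i g_i(S_i)$, and the generating polynomial
\[
P \;=\; \sum_{\substack{(S_1,\dots,S_d)\\ \text{a partition of }N}} \; \prod_{i=1}^{d} g_i(S_i) \;=\; \sum_{w} c_w\, z^{\,w+dM}
\]
has as its coefficient $c_w$ exactly the number of partitions of total weight $w$. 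Since the $c_w$ are genuine counts, all coefficients are nonnegative and no cancellation occurs, so the sought optimum is simply the largest $w$ with $c_w\neq 0$: reading off the top nonzero coefficient of $P$ solves {\sc Max Weighted Partition}. Note that $P$ has degree at most $2dM$, hence $\O(dM)$ coefficients, each bounded by $d^n$ and so of bit length $n^{\O(1)}$.

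The heart of the argument is to compute $P$, which is the value at $N$ of the $d$-fold subset convolution of the monomial-valued functions $g_i$. To enforce \emph{disjointness} I would use the standard observation that a cover of $N$ whose part-sizes sum to exactly $n$ must be a partition. Introducing a second marker $y$ for cardinalities, I form for each $i$ the zeta transform $\widehat{g}_i(X)=\sum_{S\subseteq X} g_i(S)\,y^{\,|S|}$ and apply inclusion--exclusion over subsets,
\[
P \;=\; \bigl[\,y^{\,n}\,\bigr]\;\sum_{X\subseteq N} (-1)^{\,n-|X|}\,\prod_{i=1}^{d} \widehat{g}_i(X),
\]
where $[\,y^{\,n}\,]$ extracts the coefficient of $y^n$. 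Here the signed sum counts all $d$-tuples with union $\subseteq X$, so inclusion--exclusion isolates the tuples whose union is exactly $N$; restricting to total cardinality $n$ then discards every overlapping tuple and leaves precisely the partitions. For the exponential-space bound I precompute all values $\widehat{g}_i(X)$ by the fast (Yates) zeta transform in $2^n\cdot n^{\O(1)}$ ring operations per function, store the $2^n$ values, and combine; this yields the $2^n$ term. For the polynomial-space bound I instead iterate over the $2^n$ subsets $X$ and evaluate each $\widehat{g}_i(X)$ directly as a sum over the $2^{|X|}$ subsets $S\subseteq X$; by $\sum_{X\subseteq N}2^{|X|}=3^n$ this costs $3^n\cdot n^{\O(1)}$ ring operations while keeping the working set polynomial.

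It remains to account for the ring operations themselves. Each is an addition or multiplication of bivariate polynomials whose $y$-degree is $\O(n)$ (absorbed into $n^{\O(1)}$) and whose $z$-degree is $\O(dM)$; taken together with the $d$-fold product, the polynomial arithmetic contributes the factor $d^2 M$, and the $n^{\O(1)}$-bit coefficients fold into the polynomial factor. Multiplying through gives the claimed $2^n d^2 M\cdot n^{\O(1)}$ and $3^n d^2 M\cdot n^{\O(1)}$ bounds. The main obstacle is precisely the reconciliation carried out in the first step: the fast subset convolution is a ring computation that uses subtraction in M\"obius inversion, yet the problem asks for a maximum over an additive objective. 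The generating-function encoding is the device that overcomes this, and the key correctness point to verify is that, although M\"obius inversion introduces signs internally, the final coefficients $c_w$ are true nonnegative partition counts, so extracting the top nonzero coefficient faithfully reports the optimum; the remaining care is only in bounding the polynomial degrees and coefficient sizes so that they are swallowed by the stated $d$, $M$, and $n^{\O(1)}$ factors.
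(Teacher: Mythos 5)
This theorem is imported by the paper verbatim from Bj\"orklund, Husfeldt and Koivisto \cite{BHK} with no proof of its own, and your reconstruction follows essentially the argument of that original source: encoding shifted weights as monomials $z^{f_i(S)+M}$ so that the $(\max,+)$ objective becomes the top nonzero coefficient of a nonnegative counting polynomial, enforcing disjointness by ranking with $y^{|S|}$ (union equal to $N$ together with total cardinality $n$ forces a partition), and doing inclusion--exclusion over subsets via the fast zeta transform for the $2^n$ exponential-space bound or via direct evaluation, with $\sum_{X \subseteq N} 2^{|X|} = 3^n$, for the polynomial-space bound. The one step you state too quickly is the arithmetic accounting: with schoolbook multiplication, the $d$-fold products of polynomials of $z$-degree $\mathcal{O}(dM)$ incur an extra factor of $M$ (giving $d^2M^2$ rather than $d^2M$), so the claimed $d^2 M$ factor tacitly requires near-linear polynomial multiplication (FFT or Kronecker substitution into integers with fast integer multiplication)---a standard device, but one worth making explicit since $M$, unlike logarithmic factors, cannot be absorbed into $n^{\mathcal{O}(1)}$.
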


We can assume that $r \leq n$, since otherwise we can shift all colors down.
Since $d=r$ and  $M = n \cdot r$, we obtain the following corollary.

\begin{cor} \label{cor:exact-algo}
The optimization version of the \fix problem can be solved in time:
\begin{itemize}
\item $2^n \cdot n^{\O(1)}$, using exponential space,
\item $3^n \cdot n^{\O(1)}$, using polynomial space,
\end{itemize} where $n$ is the number of vertices in the input graph.
\end{cor}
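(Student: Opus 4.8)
The plan is to combine the reduction to \textsc{Max Weighted Partition} built above with the algorithm of Björklund, Husfeldt, and Koivisto. Since the correctness of the reduction has already been verified—a maximum-weight partition into $r$ independent parts corresponds to a proper $r$-coloring $\varphi'$ minimizing $\dist(\varphi,\varphi')$, and its weight equals $-\fixnumb{\varphi}{r}{G}$—all that remains is to bound the parameters of the constructed instance $\J$ and substitute them into the cited running times.

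First I would dispose of the assumption $r \le n$: a coloring of an $n$-vertex graph uses at most $n$ distinct colors, so if $r > n$ we may delete every color missing from $\varphi$ and renumber, obtaining an equivalent instance with $r \le n$ in polynomial time. Under this assumption the parameters of $\J$ are $d = r \le n$ and $M = n \cdot r \le n^2$.

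Next I would substitute into the theorem of Björklund et al. Because $d^2 M = r^2 \cdot nr \le n^{\O(1)}$, the factor $d^2 M$ is absorbed into the polynomial term, turning the bounds $2^n d^2 M \cdot n^{\O(1)}$ and $3^n d^2 M \cdot n^{\O(1)}$ into $2^n \cdot n^{\O(1)}$ (exponential space) and $3^n \cdot n^{\O(1)}$ (polynomial space), exactly as claimed.

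The last point to handle is the infeasible case $r < \chi(G)$, where we must output $\fixnumb{\varphi}{r}{G} = \infty$. Here I would reuse the separation already observed: every $f_i$ is nonpositive, so any partition containing a non-independent part has total weight at most $-rn$, whereas any partition into $r$ independent sets has weight at least $-n > -rn$ (for $r \ge 2$). Thus, after running the algorithm, I output $\infty$ if the returned optimum $w$ satisfies $w \le -rn$, and $-w$ otherwise. There is no real algorithmic obstacle—the heavy lifting is done by the cited theorem—so the only care needed is this parameter bookkeeping together with the threshold test that distinguishes feasible from infeasible instances.
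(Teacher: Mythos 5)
Your proof is correct and follows essentially the same route as the paper: assume $r \le n$ by discarding unused colors, plug $d = r$ and $M = nr$ into the Bj\"orklund--Husfeldt--Koivisto theorem so that $d^2M$ is absorbed into the polynomial factor, and detect the infeasible case $r < \chi(G)$ via the weight threshold $-rn$. Your treatment of the threshold test is in fact slightly more explicit than the paper's, but there is no substantive difference.
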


It is known, that assuming the ETH, no $2^{o(n)} \cdot n^{\mathcal{O}(1)}$-algorithm for the \rcoloring{r} problem exists (see e.g. \cite{book}). Thus, by Proposition \ref{lem:equivalent} we immediately obtain the following corollary.

\begin{cor} \label{cor:exact-eth}
For any constant $r \geq 3$, there is no algorithm for the \rfix{r} problem with running time $2^{o(n)}$ (where $n$ is the number of vertices in the input graph), unless the ETH fails.
\end{cor}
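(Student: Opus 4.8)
The plan is to prove the contrapositive by transferring a known ETH lower bound for coloring through the equivalence packaged in Lemma~\ref{lem:equivalent}. First I would recall the standard fact, cited in the paper as \cite{book}, that under the ETH there is no algorithm solving \rcoloring{r} on $n$-vertex graphs in time $2^{o(n)} \cdot n^{\O(1)}$ for any fixed $r \geq 3$; this follows by composing the sparsification lemma with the classical linear-blowup reduction from $3$-{\sc Sat} to $3$-{\sc Coloring}, so that a subexponential coloring algorithm would yield a subexponential algorithm for $3$-{\sc Sat}.

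Next I would argue by contradiction. Suppose that for some fixed $r \geq 3$ there is an algorithm $A$ solving \rfix{r} on every $n$-vertex graph in time $2^{o(n)}$. Since the reduction of Lemma~\ref{lem:equivalent} leaves the input graph (and hence its vertex count $n$) unchanged, plugging $A$ into that lemma produces an algorithm solving \rcoloring{r} on $G$ in time $2^{o(n)} \cdot n^{\O(1)}$. Writing the running time of $A$ as $2^{g(n)}$ with $g(n) = o(n)$, the extra polynomial factor only contributes $2^{g(n)} \cdot n^{\O(1)} = 2^{g(n) + \O(\log n)}$, whose exponent is still $o(n)$. Thus the derived coloring algorithm runs in time $2^{o(n)}$, contradicting the lower bound recalled above. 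Hence no such $A$ exists unless the ETH fails.

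I do not anticipate a genuine obstacle here, which is why the statement is naturally a corollary rather than a theorem: the entire content is a routine composition of an existing hardness result with Lemma~\ref{lem:equivalent}. The only points requiring a moment of care are the asymptotic absorption $2^{o(n)} \cdot n^{\O(1)} = 2^{o(n)}$ and the observation that Lemma~\ref{lem:equivalent} does not inflate the number of vertices, so that the parameter $n$ means the same thing for both problems; both are immediate.
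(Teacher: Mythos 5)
Your proposal is correct and follows the paper's proof exactly: the paper likewise cites the ETH lower bound for \rcoloring{r} from \cite{book} and combines it with Lemma~\ref{lem:equivalent} to conclude immediately. Your additional remarks --- that the reduction preserves the vertex count and that $2^{o(n)} \cdot n^{\O(1)} = 2^{o(n)}$ --- merely spell out details the paper leaves implicit.
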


This shows that the algorithms given by Corollary \ref{cor:exact-algo} are asymptotically optimal.

\section{Parameterized complexity of \fix problem} \label{sec:parameterized}

Since the \rfix{r} problem is computationally hard, we will turn to parameterized complexity theory, in hope to identify tractable cases.

\subsection{Parameterized by $k+r$}
Clearly for a fixed integer $k$, the problem can be easily solved in time $ \binom{n}{k} r^k \cdot n^{\O(1)} = n^k (r-1)^k \cdot n^{\O(1)}$. To do it, we have to consider every $k$-element subset of vertices and check whether recoloring the chosen vertices (in $(r-1)^k$ ways, since we are interested in recoloring {\em at most} $k$ vertices and thus some colors may remain unchanged) allows us to obtain a proper coloring. Therefore our problem is in XP, when parameterized by $k$. In the remainder of this section we show that the problem is in FPT, when parameterized by $k+r$, i.e., we can solve it in time $f(k,r) \cdot n^{\mathcal{O}(1)}$ (note that the degree of the polynomial function of $n$ does not depend on $k+r$).
Consider the following algorithm.

\begin{algorithm}[H]
\caption {Fix($r$, $\I=(G,k,\varphi)$)}
\lIf {$\varphi$ is a proper coloring of $G$} {\Return {\sc Yes}} \label{lineproper}
\lIf {$k=0$} {\Return {\sc No}} \label{k0}
$xy \gets $ any edge from $G^\varphi$ \label{choosexy}\\
\ForEach {$col \in [r] \setminus \{\varphi(x)\}$ \label{loopstart}}
{
        $\varphi_1 \gets \varphi$ with vertex $x$ recolored to $col$\\
        \lIf {$Fix(r, (G,k-1,\varphi_1))=$ {\sc Yes}}{\Return {\sc Yes}} \label{linecall}
}
\ForEach {$col \in [r] \setminus \{\varphi(y)\}$}
{
        $\varphi_1 \gets \varphi$ with vertex $y$ recolored to $col$\\
        \lIf {$Fix(r, (G,k-1,\varphi_1))=$ {\sc Yes}}{\Return {\sc Yes}}\label{linecall2}
}
\Return {\sc No} \label{lineNo}
\end{algorithm}

\begin{lem} \label{lem:step}
Let $\varphi$ be a non-proper $r$-coloring of $G$.
Then $\I=(G,k,\varphi)$ is a {\sc Yes}-instance of \rfix{r} if and only if for any edge $xy \in E(G^{\varphi})$ there exists an $r$-coloring (possibly non-proper) $\varphi_1$ of $G$ such that:
\begin{enumerate}
\item $\varphi \ominus \varphi_1 = \{x\}$ or $\varphi \ominus \varphi_1 = \{y\}$,
\item $\I'=(G,k-1,\varphi_1)$ is a {\sc Yes}-instance of \rfix{r}.
\end{enumerate}
\end{lem}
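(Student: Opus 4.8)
The plan is to prove both implications by relating the recoloring process to Observation \ref{conflict}. The statement is essentially a branching-correctness lemma: it says that to fix the coloring $\varphi$, we may safely commit to recoloring one of the two endpoints of an arbitrarily chosen conflict edge $xy$ as the very first step. The crucial point is that the choice of the edge $xy$ is universal (``for any edge''), while the existence of the first-step coloring $\varphi_1$ is existential, so the two directions of the equivalence require somewhat different handling.

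For the backward direction ($\Leftarrow$), I would argue directly. Suppose that for the fixed edge $xy$ there is a coloring $\varphi_1$ satisfying (1) and (2). Since $\I'=(G,k-1,\varphi_1)$ is a \textsc{Yes}-instance, there is a proper $r$-coloring $\varphi'$ of $G$ with $\dist(\varphi_1,\varphi') \le k-1$. By the triangle inequality for Hamming distance, $\dist(\varphi,\varphi') \le \dist(\varphi,\varphi_1) + \dist(\varphi_1,\varphi') \le 1 + (k-1) = k$, where the first summand is $1$ because $\varphi \ominus \varphi_1$ is a single vertex by (1). Hence $\varphi'$ witnesses that $\I=(G,k,\varphi)$ is a \textsc{Yes}-instance. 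This direction is the routine one and does not even use that $\varphi$ is non-proper.

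For the forward direction ($\Rightarrow$), let $\varphi'$ be a witness for $\I$, so $\varphi'$ is proper and $S := \varphi \ominus \varphi'$ has $|S| \le k$. Fix an arbitrary edge $xy \in E(G^\varphi)$, meaning $\varphi(x)=\varphi(y)$ and $xy \in E(G)$. Since $\varphi'$ is proper, $\varphi'(x) \neq \varphi'(y)$, so at least one endpoint of $xy$ must differ in $\varphi'$ from $\varphi$; by Observation \ref{conflict} applied to this single edge, we get $\{x,y\} \cap S \neq \emptyset$. Say $x \in S$ (the case $y \in S$ is symmetric). I then define $\varphi_1$ to be $\varphi$ with $x$ recolored to $\varphi'(x)$; this satisfies (1) with $\varphi \ominus \varphi_1 = \{x\}$, since $\varphi'(x) \neq \varphi(x)$. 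It remains to check (2): the same $\varphi'$ is still proper, and now $\varphi_1 \ominus \varphi' = S \setminus \{x\}$, which has size at most $k-1$, so $\varphi'$ witnesses that $\I'=(G,k-1,\varphi_1)$ is a \textsc{Yes}-instance.

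The only subtle point — and the step I would treat most carefully — is the claim that $\varphi_1 \ominus \varphi' = S \setminus \{x\}$. This holds because $\varphi_1$ agrees with $\varphi'$ exactly on $x$ (we set $\varphi_1(x) = \varphi'(x)$) and agrees with $\varphi$ everywhere else, so the vertices on which $\varphi_1$ and $\varphi'$ disagree are precisely those of $S$ other than $x$. This is where the hypothesis that $\varphi$ is non-proper is implicitly used: it guarantees $G^\varphi$ has at least one edge, so that an edge $xy$ can be chosen at all and the branching of Algorithm \ref{lineNo} is well-defined. Everything else is a direct consequence of the definition of $\ominus$ and the triangle inequality, so no further combinatorial machinery is needed.
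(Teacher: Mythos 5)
Your proof is correct and takes essentially the same route as the paper's: the forward direction defines $\varphi_1$ by recoloring one conflicted endpoint $x$ of the chosen edge to its witness color $\varphi'(x)$, and the backward direction is the triangle inequality $\dist(\varphi,\varphi') \leq \dist(\varphi,\varphi_1) + \dist(\varphi_1,\varphi') \leq 1+(k-1)=k$. The only difference is that you explicitly verify $\varphi_1 \ominus \varphi' = S \setminus \{x\}$ (hence $|\varphi_1 \ominus \varphi'| \leq k-1$), a step the paper compresses into ``it is clear that it satisfies the conditions given in lemma''.
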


\begin{proof}
First assume that $\I=(G,k,\varphi)$ is a {\sc Yes}-instance of \rfix{r} and let $\varphi'$ be its witness. Consider an edge $xy$ from $G^{\varphi}$. By the definition of $G^{\varphi}$, we have $\varphi(x) = \varphi(y)$.
Since $\varphi'$ is proper, clearly $\varphi'(x) \neq \varphi'(y)$. Then at least one of the vertices $x,y$ has changed its color. Without loss of generality assume that $\varphi'(x) \neq \varphi(x)$. Let $\varphi_1$ be a coloring defined as follows.
$$\varphi_1(u) = \begin{cases}\varphi(u) & \text{ if } u \neq x\\
						\varphi'(u) & \text{ if } u = x.\end{cases}$$
It is clear that it satisfies the conditions given in lemma.

Now consider $\varphi$ being an $r$-coloring of $G$ and let $xy$ be an edge from $G^{\varphi}$. 
Without loss of generality let $\varphi_1$ to be some $r$-coloring of $G$ such that $\varphi \ominus \varphi_1 = \{x\}$ and the instance $\I'=(G,k-1,\varphi_1)$ is a {\sc Yes}-instance of \rfix{r}. 

Let $\varphi_1'$ be a witness of $\I'$. Notice that $\dist(\varphi_1',\varphi) \leq \dist(\varphi_1',\varphi_1) + \dist(\varphi_1,\varphi) \leq k$ and therefore $\I=(G,k,\varphi)$ is a {\sc Yes}-instance of \rfix{r} with witness $\varphi_1'$.
\end{proof}

\begin{lem}
The algorithm {\em Fix} solves \rfix{r} problem for any $r$.
\end{lem}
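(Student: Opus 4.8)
The plan is to prove correctness by induction on the parameter $k$, leveraging the recursive structure of the algorithm, which mirrors exactly the characterization established in Lemma \ref{lem:step}. Concretely, I would prove that \emph{Fix}$(r,(G,k,\varphi))$ returns {\sc Yes} if and only if $(G,k,\varphi)$ is a {\sc Yes}-instance of \rfix{r}. Before the induction, I would record termination: every recursive call on lines \ref{linecall} and \ref{linecall2} decreases the parameter from $k$ to $k-1$, so the recursion depth is bounded by $k$ and the algorithm always halts.

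First I would dispose of the cases handled without recursion. Line \ref{lineproper} returns {\sc Yes} whenever $\varphi$ is already proper, which is correct since $\varphi$ is then its own witness at distance $0 \le k$; this also lets me assume in the inductive step that $\varphi$ is non-proper. For the base case $k=0$ with $\varphi$ non-proper, no vertex may be recolored, so the only candidate witness is $\varphi$ itself, which is improper; hence line \ref{k0} correctly returns {\sc No}.

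For the inductive step I would assume the equivalence for parameter $k-1$ and take a non-proper $\varphi$ with $k \ge 1$. Since $\varphi$ is non-proper, $G^\varphi$ has an edge, so line \ref{choosexy} selects a well-defined $xy \in E(G^\varphi)$. The two loops enumerate precisely the colorings $\varphi_1$ obtained from $\varphi$ by recoloring $x$ to some color of $[r]\setminus\{\varphi(x)\}$ or $y$ to some color of $[r]\setminus\{\varphi(y)\}$, that is, exactly the colorings satisfying condition~(1) of Lemma \ref{lem:step}. By the induction hypothesis each such recursive call returns {\sc Yes} if and only if $(G,k-1,\varphi_1)$ is a {\sc Yes}-instance, i.e.\ if and only if condition~(2) holds. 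Hence some call succeeds exactly when a $\varphi_1$ meeting both conditions of Lemma \ref{lem:step} exists, which by that lemma is equivalent to $(G,k,\varphi)$ being a {\sc Yes}-instance; and if none succeeds, line \ref{lineNo} returns {\sc No}, again matching the lemma.

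Given Lemma \ref{lem:step}, I do not expect a serious obstacle: the argument is essentially bookkeeping confirming that the branching matches the lemma's case split. The one point deserving care is verifying that the two loops are exhaustive over the set described in condition~(1), and in particular that committing to a single conflict edge $xy$ (rather than iterating over all of them) loses nothing. This is exactly what the ``for any edge $xy$'' quantifier in the statement of Lemma \ref{lem:step} guarantees, so the single branch chosen on line \ref{choosexy} suffices.
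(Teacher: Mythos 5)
Your proof is correct and follows essentially the same route as the paper's: induction on $k$ with the base cases of lines \ref{lineproper} and \ref{k0}, invoking Lemma \ref{lem:step} to justify the branching on the single conflict edge chosen in line \ref{choosexy}. Your explicit remarks on termination and on why the ``for any edge'' quantifier in Lemma \ref{lem:step} makes one fixed edge suffice are slightly more detailed than the paper's writeup, but the argument is the same.
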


\begin{proof}
Let $\I = (G,k,\varphi)$ be an instance of \rfix{r}. If $\varphi$ is a proper labeling of $G$, then the algorithm returns {\sc Yes} in line \ref{lineproper}. Suppose then that $\varphi$ is not proper. If $k=0$, the algorithm returns {\sc No} in line \ref{k0}.

Assume that $k > 0$ and the algorithm works properly for all instances with parameter smaller than $k$. Suppose that $\I = (G,k,\varphi)$ is a {\sc Yes}-instance of \rfix{r}. Let $xy$ be an edge chosen in line \ref{choosexy}.
Then, by Lemma \ref{lem:step}, there exist an $r$-coloring $\varphi_1$ of $G$ such that $\varphi \ominus \varphi_1 = \{x\}$ or $\varphi \ominus \varphi_1 = \{y\}$ and $\I'=(G,k-1,\varphi_1)$ is a {\sc Yes}-instance of \rfix{r}. Without loss of generality assume that $\varphi \ominus \varphi_1 = \{x\}$. Let us consider an iteration of the loop in lines \ref{loopstart}--\ref{linecall} color $\varphi_1(x)$. By the inductive assumption, the recursive call in line \ref{linecall} returns {\sc Yes} and therefore the whole algorithm returns {\sc Yes}.

If $\I$ is a {\sc No}-instance of \rfix{r}, then by the inductive assumption and Lemma \ref{lem:step}, for every $col$ the recursive calls in lines \ref{linecall} and \ref{linecall2} return {\sc No}. Therefore the algorithm returns {\sc No} in line \ref{lineNo}.
\end{proof}

Let $T(n,k)$ be the computational complexity of the algorithm {\em Fix}. We can write the following recursive formula:
$$T(n,k) \leq n^{\O(1)} + (r-1) \cdot T(n,k-1) + (r-1) \cdot T(n,k-1).$$

Solving it, we obtain the following.

\begin{thm} \label{fpt-algo}
The algorithm {\em Fix} solves \fix problem in time $T(n,k) \leq  \left( 2(r-1) \right)^k \cdot n^{\mathcal{O}(1)}$.
\end{thm}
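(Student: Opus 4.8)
The plan is to bound the running time $T(n,k)$ of the algorithm \emph{Fix} by solving the given recurrence. The algorithm is essentially a bounded-depth search tree, so the natural approach is to unfold the recursion and count the leaves, weighting each recursive call by its polynomial overhead. First I would observe that the recurrence
$$
T(n,k) \leq n^{\O(1)} + 2(r-1) \cdot T(n,k-1)
$$
follows directly from the structure of the algorithm: before branching, the algorithm performs only polynomial-time work (checking whether $\varphi$ is proper, testing $k=0$, and locating an edge $xy$ of $G^\varphi$), and then it makes at most $(r-1)$ recursive calls for each of the two endpoints $x$ and $y$, each with the parameter decreased by one.

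Next I would solve this recurrence by induction on $k$. The base case is $k=0$, where the algorithm returns in line \ref{k0} after polynomial work, so $T(n,0) = n^{\O(1)} = (2(r-1))^0 \cdot n^{\O(1)}$. For the inductive step, I would substitute the hypothesis $T(n,k-1) \leq (2(r-1))^{k-1} \cdot n^{\O(1)}$ into the recurrence, yielding
$$
T(n,k) \leq n^{\O(1)} + 2(r-1) \cdot (2(r-1))^{k-1} \cdot n^{\O(1)} = n^{\O(1)} + (2(r-1))^{k} \cdot n^{\O(1)}.
$$
Since the depth of the recursion is at most $k$ and the polynomial factor is absorbed at each level, the additive $n^{\O(1)}$ term contributes only a further polynomial factor across all levels, giving the claimed bound $T(n,k) \leq (2(r-1))^k \cdot n^{\O(1)}$.

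The only subtlety — and the one point requiring a little care rather than genuine difficulty — is verifying that the polynomial overhead accumulated along the search tree does not blow up the stated bound. The search tree has at most $(2(r-1))^k$ leaves and $\O((2(r-1))^k)$ internal nodes, and each node incurs $n^{\O(1)}$ work (the dominant cost being the construction of $G^\varphi$ and the scan for a conflicting edge, both of which are polynomial in $n$). Multiplying the number of nodes by the per-node cost keeps the total within $(2(r-1))^k \cdot n^{\O(1)}$, since the product of two polynomial factors is again polynomial. I would note that the correctness of the algorithm is already established by the preceding lemma and by Lemma~\ref{lem:step}, so here only the timing analysis remains, and no branching case is overlooked because the loops in lines \ref{loopstart}--\ref{linecall2} exhaustively cover recoloring either endpoint of the chosen conflict edge to any admissible color.
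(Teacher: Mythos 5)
Your proof is correct and takes essentially the same approach as the paper, which derives the very same recurrence $T(n,k) \leq n^{\mathcal{O}(1)} + (r-1)\,T(n,k-1) + (r-1)\,T(n,k-1)$ from the two branching loops and simply states that solving it yields the bound, with correctness delegated to the preceding lemma. Your explicit induction and the accounting of $\mathcal{O}\left((2(r-1))^k\right)$ search-tree nodes times $n^{\mathcal{O}(1)}$ work per node just spells out the details the paper leaves implicit.
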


\begin{cor} \label{fpt}
The \fix problem is in FPT, when parameterized by $k+r$.
\end{cor}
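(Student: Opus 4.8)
The plan is to read off the corollary immediately from Theorem~\ref{fpt-algo}. Recall that a problem is FPT with respect to a parameter $\kappa$ if it admits an algorithm running in time $g(\kappa) \cdot n^{\O(1)}$ for some computable function $g$, with the exponent of $n$ independent of $\kappa$. Theorem~\ref{fpt-algo} already supplies such an algorithm, since the bound $(2(r-1))^k \cdot n^{\O(1)}$ isolates all dependence on $k$ and $r$ into the leading factor $(2(r-1))^k$, while the degree of the polynomial in $n$ is fixed once and for all.

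The only point requiring a word of justification is the reconciliation of this two-parameter bound with the single-parameter formulation in terms of $k+r$. First I would set $f(k,r) := (2(r-1))^k$, so that the running time is $f(k,r)\cdot n^{\O(1)}$; this is already fixed-parameter tractability with respect to the combined parameter $(k,r)$. To phrase it as tractability in the sum $k+r$, define $g(s) := \max_{k'+r' \le s} f(k',r')$. For each fixed value of $s$ there are only finitely many pairs $(k',r')$ of nonnegative integers with $k'+r' \le s$, so $g$ is well defined and computable, and clearly $f(k,r) \le g(k+r)$. Hence the algorithm \emph{Fix} runs in time $g(k+r)\cdot n^{\O(1)}$, which is exactly the desired FPT bound for the parameter $k+r$.

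There is no genuine obstacle here. All of the combinatorial and algorithmic content lives in the correctness argument for the procedure \emph{Fix} and in solving the branching recurrence that yields Theorem~\ref{fpt-algo}; the corollary is a routine translation of that running-time estimate into the language of parameterized complexity, combined with the elementary fact that any computable function of two nonnegative integer parameters is dominated by a computable function of their sum.
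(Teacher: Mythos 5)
Your proposal is correct and matches the paper, which derives Corollary~\ref{fpt} immediately from the running-time bound $(2(r-1))^k \cdot n^{\O(1)}$ of Theorem~\ref{fpt-algo}. Your extra step of dominating $f(k,r)=(2(r-1))^k$ by a computable function $g(k+r)$ is a routine formality the paper leaves implicit, not a different approach.
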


Since $r \leq n$, we also obtain the following.

\begin{cor} \label{xp}
The \fix problem is in XP, when parameterized by $k$.
\end{cor}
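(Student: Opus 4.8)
The plan is to obtain XP membership as an immediate consequence of the FPT running time established in Theorem \ref{fpt-algo}, together with the elementary bound $r \leq n$. Recall that a problem lies in XP, when parameterized by $k$, if it can be solved in time $n^{f(k)}$ for some function $f$. Since here we parameterize by $k$ alone, the number of colors $r$ is no longer a parameter, so its contribution to the running time of Theorem \ref{fpt-algo} must be absorbed into a function of $n$.

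First I would invoke the standing assumption, justified in the discussion preceding Corollary \ref{cor:exact-algo}, that we may take $r \leq n$: if more than $n$ colors are available we simply shift all colors down, as an $n$-vertex graph never needs more than $n$ colors, and this relabeling changes neither the answer nor the minimum recoloring distance. Applying Theorem \ref{fpt-algo} under this assumption, the algorithm \emph{Fix} runs in time
$$
\left( 2(r-1) \right)^k \cdot n^{\O(1)} \leq \left( 2(n-1) \right)^k \cdot n^{\O(1)} \leq (2n)^k \cdot n^{\O(1)}.
$$

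Next I would simplify the dominant factor by writing $(2n)^k = 2^k \cdot n^k$ and observing that $2^k \leq n^k$ for all $n \geq 2$ (the degenerate cases $n \leq 1$ being trivial). Hence $(2n)^k \leq n^{2k}$, so the total running time is bounded by $n^{\O(k)}$, which is precisely the form $n^{f(k)}$ required for membership in XP. This completes the argument.

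I do not anticipate any genuine obstacle, since the corollary follows mechanically from the preceding theorem once $r$ is bounded by $n$. The only point deserving a moment's care is the justification that $r \leq n$ may be assumed without loss of generality; this rests on the observation that any proper coloring of an $n$-vertex graph uses at most $n$ distinct colors, so surplus colors can be discarded by relabeling without affecting either feasibility or the optimum.
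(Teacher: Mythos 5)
Your proposal is correct and follows exactly the paper's route: the corollary is derived from Theorem \ref{fpt-algo} together with the standing assumption $r \leq n$, yielding a running time of $n^{\O(k)}$. Your additional care in justifying $r \leq n$ and spelling out $(2n)^k \leq n^{2k}$ only makes explicit what the paper leaves implicit.
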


%

A natural question to ask is whether the \fix problem is FPT, when parameterized by $k$ only (i.e., the number $r$ of colors is not fixed). We answer this question in the negative.

Let us start with presenting an auxiliary problem, called \msi. An instance of \msi is a host graph $H$, whose vertices are partitioned into $k$ color classes $V_1 \cupdot V_2 \cup \ldots \cupdot V_k$, and a pattern graph $P$ with $k$ vertices $\{u_1,u_2,\ldots,u_k\}$.
We ask if there exists a mapping $\psi \colon V(P) \to V(H)$, such that $\psi(u_i) \in V_i$ for all $i$, and $\psi(u_i)\psi(u_j) \in E(H)$ for all $u_iu_j \in E(P)$.
Observe that we can thus assume that each $V_i$ is an independent set in $G$, and that there is an edge between $V_i$ and $V_j$ if and only if $u_iu_j \in E(P)$.

The $W[1]$-hardness of \msi follows from the $W[1]$-hardness of {\sc Multicolored Clique}~(see \cite{book}). Marx \cite{Marx}, showed a stronger lower bound for this problem (see also Marx and Pilipczuk~\cite{MP}).

\begin{thm}[Marx \cite{Marx}]\label{thm:msi}
For any computable function $f$, \msi cannot be solved in time $f(k) \cdot (|V(H)| + |E(H)|)^{o(k / \log k)}$, unless the ETH fails, even if $P$ is 3-regular and bipartite.
\end{thm}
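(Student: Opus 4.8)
The statement is a cited result of Marx, so rather than invent a new argument I would reconstruct its proof, and the natural route is a reduction from $k$-\textsc{Clique}. The plan is to start from the ETH-based lower bound asserting that, for a host graph on $n$ vertices, $k$-\textsc{Clique} admits no $f(k)\cdot n^{o(k)}$ algorithm (this follows from the ETH via the sparsification lemma and the standard chain of reductions). The goal is then to encode a clique of size $t$ as an \msi instance whose pattern $P$ is a fixed $3$-regular bipartite graph on $k$ vertices, arranged so that $t=\Theta(k/\log k)$. An algorithm achieving the forbidden running time $f(k)\cdot(|V(H)|+|E(H)|)^{o(k/\log k)}$ would then translate into an $f'(t)\cdot n^{o(t)}$ algorithm for $k$-\textsc{Clique}, contradicting the ETH.

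Concretely, I would take $P$ to be a $3$-regular bipartite \emph{expander} on $k$ vertices; such graphs exist in abundance and have treewidth $\Theta(k)$. Given a \textsc{Clique} instance $(H_0,t)$, I would build a host graph $H$ partitioned along $V(P)$, where the color class $V_u$ for $u\in V(P)$ contains one branch gadget per vertex (or short tuple of vertices) of $H_0$, and the edges between $V_u$ and $V_{u'}$ for $uu'\in E(P)$ record adjacency/compatibility in $H_0$. The role of the expander is that $K_t$ can be \emph{embedded} into $P$ with controlled congestion: each of the $t$ clique vertices is realized by a connected branch set of $P$ and each of the $\binom{t}{2}$ clique edges is routed along a short path, so that a partitioned copy of $P$ in $H$ reads off exactly a $t$-clique of $H_0$, and conversely. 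Keeping $|V(H)|+|E(H)|$ polynomial in $|V(H_0)|$ is routine.

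The crux, and the step I expect to be the main obstacle, is the quantitative embedding lemma that forces precisely the logarithmic loss: a bounded-degree graph of treewidth $\Theta(k)$ can host a clique of size only $t=\Theta(k/\log k)$, and no larger. The obstruction is a congestion/bandwidth argument, namely that routing all $\binom{t}{2}$ clique edges as low-congestion paths through a graph of maximum degree $3$ cannot be done with fewer than $\Theta(t\log t)$ vertices, since a bounded-degree expander lacks the ``width'' to carry $\Theta(t^2)$ simultaneous connections without a logarithmic blow-up. Establishing this bound tightly (in both directions) via expander mixing and well-linkedness is the technical heart of Marx's argument; once it is in place, the running-time accounting of the previous paragraph closes the reduction and yields the claimed $f(k)\cdot(|V(H)|+|E(H)|)^{o(k/\log k)}$ lower bound, including the strengthening that $P$ may be taken $3$-regular and bipartite.
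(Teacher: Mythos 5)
First, a framing remark: the paper does not prove this statement at all --- it is imported as a black box from Marx \cite{Marx} (with the $3$-regular bipartite strengthening as in Marx and Pilipczuk \cite{MP}) --- so your reconstruction has to be measured against Marx's actual argument. And there it contains a genuine gap: your ``crux lemma'' is quantitatively false, in exactly the direction you need. A $3$-regular graph $P$ on $k$ vertices has only $3k/2$ edges, so a minor model of $K_t$ (disjoint connected branch sets, each of the $\binom{t}{2}$ pattern edges realized by a host edge) forces $\binom{t}{2} \leq 3k/2$, i.e.\ $t = O(\sqrt{k})$; more generally, if branch sets may overlap with congestion $q$ (your ``low-congestion routing''), each host vertex or edge can realize only $O(q^2)$ of the $\binom{t}{2}$ required touching pairs, so $k = \Omega(t^2/q^2)$. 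Hence no embedding of $K_t$ into a bounded-degree graph on $\Theta(t\log t)$ vertices with constant or polylogarithmic congestion exists: the obstruction is the quadratic number of clique edges against the linear number of host edges, not the logarithmic ``bandwidth'' effect you invoke. The failure is not repairable within your scheme. To close the running-time translation you must keep $|V(P)| = O(t \log t)$, since otherwise $o(k/\log k)$ is no longer $o(t)$; by the counting above this forces $q = \Omega(\sqrt{t/\log t})$, whereupon each color class of $H$ must encode $q$-tuples of vertices of the \textsc{Clique} instance, so $|V(H)| = n^{\Omega(q)}$ and the hypothetical \msi algorithm translates to a clique algorithm with exponent $o(k/\log k)\cdot\Omega(q) = \omega(t)$ --- no contradiction with the $n^{o(t)}$ lower bound for \textsc{Clique}.

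Marx's proof avoids dense sources for precisely this reason. He reduces from 3-\textsc{Sat} after sparsification (so the hard instances have $m$ clauses and, under the ETH, no $2^{o(m)}$ algorithm), whose constraint graph is \emph{sparse} and bounded-degree with $\Theta(m)$ edges. The technical heart is an embedding lemma: every graph with $m$ edges embeds into every graph of treewidth $k$ with depth (congestion) $q = O((m/k)\log k)$, proved via well-linked sets and multicommodity flow; the $\log k$ loss is the flow--cut gap, not a distance/expansion argument, and whether it can be removed is left open in \cite{Marx} --- so the tightness ``in both directions'' you posit is neither known nor needed (only the embeddability direction enters the reduction). Because the source problem has constant-size domains, each color class of the host has size $2^{O(q)}$ rather than $n^{\Omega(q)}$, and the accounting closes: an \msi algorithm with exponent $o(k/\log k)$ yields time $2^{o(qk/\log k)} = 2^{o(m)}$ for 3-\textsc{Sat}, contradicting the ETH. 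Note that here the congestion $q$ is allowed to be polynomially large (when $m \gg k$), which is the structural point your clique-based plan misses.
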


The reduction again exploits the close relation between fixing a coloring and extending a partial coloring.

\begin{thm}\label{thm:w1hard}
The \fix problem is $W[1]$-hard, when parameterized by the number $k$ of allowed recoloring operations.
Moreover, it cannot be solved in time $f(k) \cdot n^{o(k / \log k)}$ for any computable function $f$, unless the ETH fails, even if the input graph is bipartite.
\end{thm}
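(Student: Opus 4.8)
The plan is to reduce from \msi, relying on the strong lower bound of Theorem~\ref{thm:msi} (valid even when the pattern $P$ is $3$-regular and bipartite). Given an instance $(H,\,V_1\cupdot\cdots\cupdot V_k,\,P)$, I would construct in polynomial time an instance of \fix over a palette of size polynomial in $|V(H)|+|E(H)|$, with recolouring budget $k'=\Theta(k)$. It is convenient to design the instance first as a \listfix instance and then realise every list by attaching, to each vertex, a bundle of $k'+1$ private pendant neighbours coloured with the forbidden colours; by Observation~\ref{obs-list} these forbid exactly the intended colours, they cost nothing to keep proper, and being pendant they do not spoil bipartiteness.

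For the \emph{selection} part, for every pattern vertex $u_i$ I introduce a selector $a_i$ whose list is $\{c_v : v\in V_i\}$ (one fresh colour per vertex of the class), set $\varphi(a_i)$ to a private colour $\star_i$ outside this list, and attach a frozen pendant $b_i$ with list $\{\star_i\}$ and $\varphi(b_i)=\star_i$. The edge $a_ib_i$ is then the unique conflict at the $i$-th class, and since $b_i$ cannot move, repairing it forces exactly one recolouring of $a_i$ to some $c_v$; this both consumes one unit of budget and records the intended image $\psi(u_i)=v\in V_i$.

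The heart of the reduction --- and the step I expect to be the main obstacle --- is the \emph{verification} gadget attached to each pattern edge $u_iu_j\in E(P)$, whose job is to ensure that the selected pair of colours $(c_v,c_w)$ on $(a_i,a_j)$ can be completed to a proper colouring within the remaining budget if and only if $vw\in E(H)$. The difficulty is structural: a properness constraint only expresses \emph{dis}equality of colours, so a fixed (non-recoloured) neighbourhood can forbid a single colour on a single selector but can never impose a genuinely pairwise constraint such as adjacency in $H$ --- such constraints are of product form, i.e.\ mere lists. Consequently the check must be performed by vertices that are themselves recoloured. Exploiting that $P$ is $3$-regular, so $|E(P)|=\tfrac{3}{2}k=\Theta(k)$, I can afford a constant number of extra forced recolourings per pattern edge; concretely I would add, for each $u_iu_j$, an edge-selector forced (again by a frozen pendant) to take a colour encoding some edge of $H$ between $V_i$ and $V_j$, wired to $a_i$ and $a_j$ so that avoiding a fresh, unfixable conflict requires the chosen $H$-edge to have endpoints agreeing with the choices of $a_i$ and $a_j$. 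Making these gadgets simultaneously (i) certify the adjacency relation, (ii) leave a consistent global choice repairable with an exactly predictable $\Theta(k)$ total budget while any inconsistency produces at least one surplus conflict the budget cannot absorb, and (iii) keep the whole graph bipartite (which I would enforce by subdividing gadget edges and $2$-colouring according to the bipartition of $P$) is exactly where the technical work lies.

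Granting such gadgets, correctness is routine. In the forward direction, a valid \msi-solution $\psi$ gives a repair: recolour each $a_i$ to $c_{\psi(u_i)}$ and each edge-selector to the corresponding $H$-edge colour; all conflicts disappear and exactly $k'$ recolourings are used. In the backward direction, any repair within budget $k'$ must, by the frozen-pendant argument, spend its whole budget on the selectors and edge-selectors, hence recolours each $a_i$ to some $c_v$ with $v\in V_i$ and each edge-selector to a genuine $H$-edge; the verification gadgets then force these choices to be mutually consistent, yielding a valid mapping $\psi$. Finally, since $k'=\Theta(k)$ and $|V(G)|=(|V(H)|+|E(H)|)^{\mathcal{O}(1)}$, an algorithm for \fix running in $f(k')\cdot |V(G)|^{o(k'/\log k')}$ would translate into an $f'(k)\cdot(|V(H)|+|E(H)|)^{o(k/\log k)}$ algorithm for \msi, contradicting Theorem~\ref{thm:msi} under the ETH, while the plain $W[1]$-hardness follows from that of \msi.
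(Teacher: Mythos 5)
Your high-level plan coincides with the paper's: reduce from \msi using Theorem~\ref{thm:msi}, design the instance as \listfix and realize the lists by bundles of $k'+1$ private pendant neighbours (Observation~\ref{obs-list}), force one recoloring per class selector by giving it an off-list initial colour, spend a constant number of further recolorings per pattern edge (using that $P$ is $3$-regular, so $|E(P)|=\tfrac32 k$), and transfer the $f(k)\cdot n^{o(k/\log k)}$ lower bound with budget $\Theta(k)$. But there is a genuine gap: the entire mathematical content of the theorem is the edge-verification gadget, and your proposal explicitly defers it (``granting such gadgets, correctness is routine''), i.e.\ it stops exactly at the step you yourself identify as the obstacle. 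Moreover, the one concrete mechanism you sketch would fail for the very reason you state: an edge-selector ranging over \emph{fresh} colours $c_{vw}$ encoding edges of $H$ can never share a colour with $a_i$ or $a_j$, whose palettes consist of vertex-colours, so wiring it to them imposes no properness constraint whatsoever --- disequality constraints between disjoint palettes are vacuous. Your fallback of subdividing gadget edges to restore bipartiteness is also unsafe in this problem: subdivision vertices carry colours and can themselves be recoloured or create conflicts, so subdivision changes the instance rather than merely its parity.

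The paper closes this gap with a vacancy-matching gadget whose palette is $V(H)$ itself (plus one special colour $0$ used as the initial colour of the selectors $x_1,\dots,x_k$, with $L(x_i)=V_i$). For each ordered pair $(i,j)$ with $u_iu_j\in E(P)$ and each $v\in V_i$ there is a vertex $x_{i,j}^v$, initially coloured $v$, adjacent to $x_i$, with list $\{v\}\cup\{v'\in V_j : vv'\in E(H)\}$, and the two sides $\{x_{i,j}^v\}_{v\in V_i}$ and $\{x_{j,i}^{v'}\}_{v'\in V_j}$ are joined by a complete bipartite graph. Recolouring $x_i$ to $v$ creates a conflict at exactly one gadget vertex, namely $x_{i,j}^v$, which must then move to a colour in $\{v'\in V_j : vv'\in E(H)\}$; since it sees the entire opposite side, the only colour it can take is the single colour \emph{vacated} there, forcing $\varphi'(x_{i,j}^v)=v'$ and $\varphi'(x_{j,i}^{v'})=v$, which certifies $vv'\in E(H)$ through the lists. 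The exact budget $\ell=4k$ (that is, $k$ selector moves plus two moves in each of the $\tfrac32 k$ gadgets) makes any surplus conflict unaffordable, and bipartiteness requires no subdivision: each gadget is bipartite and the sides can be assigned consistently according to the bipartition of $P$. In short, the adjacency check is performed not by a direct pairwise constraint (which, as you observe, properness cannot express) but by conflict creation plus vacancy matching under an exactly counted budget; this is the idea your proposal is missing.
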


\begin{proof}
Consider an instance of \msi, consisting of the host graph $H$ and the 3-regular bipartite pattern graph $P$.
The vertex set of $P$ is $\{u_1,u_2,\ldots,u_k\}$, note that $P$ has $3k/2$ edges.
The vertex set of $H$ has $n$ elements and is partitioned into $k$ independent sets $V_1,V_2,\ldots,V_k$. 

To make the description more clear, we will first create an instance $(G,\ell,\varphi,L)$ of \listfix, where $\ell$ is the number of allowed recoloring operations, $\varphi$ is the initial coloring, and $L$ are lists of allowed colors in the final coloring.

The colors used by $\varphi$ and appearing on lists $L$ will be identified with the vertices of $H$, except for one additional special color, denoted by 0. Thus the total number of colors is $n+1$. Also set $\ell=4k$.

First, we introduce to $G$ vertices $x_1,x_2,\ldots,x_k$, each colored with color 0. For each $i$, we set $L(x_i) = V_i$.
Now, for every edge of $P$ we introduce the following edge gadget.
Consider an ordered pair $(i,j)$, such that $1 \leq i,j \leq k$ and $u_iu_j \in E(P)$. For each $v \in V_i$, we introduce a vertex $x_{i,j}^v$, which is adjacent to $x_i$. For all $v \in V_i,v' \in V_j$ the vertices $x_{i,j}^v$ and $x_{j,i}^{v'}$ are adjacent, so the main part of every edge gadget is a complete bipartite graph.
Finally, for every $x_{i,j}^v$ we set $\varphi(x_{i,j}^v) = v$ and $L(x_{i,j}^v) = \{v\} \cup \{ v' \in V_j \colon vv' \in E(H)\}$ (see Fig.\ref{fig:edgegadget}).
\begin{figure}[h]
\centering

\begin{tikzpicture}[scale = 1.3]
\node[draw,circle] (xi1) at (1,1){$v_1$};
\node[draw,circle] (xi2) at (1,0){$v_2$};
\node[draw,circle] (xi3) at (1,-1){$v_3$};

\node[draw,circle] (xj1) at (3,1){$v'_1$};
\node[draw,circle] (xj2) at (3,0){$v'_2$};
\node[draw,circle] (xj3) at (3,-1){$v'_3$};

\draw (1,0) ellipse (0.8 and 2);
\draw (3,0) ellipse (0.8 and 2);
\node at (1,1.5) {$V_i$};
\node at (3,1.5) {$V_j$};

\draw (xi1) -- (xj3);
\draw (xi2) -- (xj1);
\draw (xi2) -- (xj1);
\draw (xi2) -- (xj2);
\draw (xi3) -- (xj3);
\end{tikzpicture} \hfill
\begin{tikzpicture}[scale = 1.2, yscale=2]

\node[draw,circle] (xi) at (0,0){$0$};

\node[draw,circle] (xi1) at (1,1){$v_1$};
\node[draw,circle] (xi2) at (1,0){$v_2$};
\node[draw,circle] (xi3) at (1,-1){$v_3$};

\node[draw,circle] (xj1) at (3,1){$v'_1$};
\node[draw,circle] (xj2) at (3,0){$v'_2$};
\node[draw,circle] (xj3) at (3,-1){$v'_3$};

\node[draw,circle] (xj) at (4,0){$0$};

\node [above left  =0.01 of xi] {$v_1,v_2,v_3$};
\node [left  =0.01 of xi1] {$v_1,v_3'$};
\node [below  =0.01 of xi2] {$v_2,v_1',v_2'$};
\node [left  =0.01 of xi3] {$v_3,v_3'$};

\node [below right  =0.01 of xj] {$v'_1,v'_2,v'_3$};
\node [right  =0.01 of xj1] {$v_1',v_2$};
\node [above  =0.01 of xj2] {$v_2',v_2$};
\node [right  =0.01 of xj3] {$v_3',v_1,v_3$};

\draw (xi) -- (xi1);
\draw (xi) -- (xi2);
\draw (xi) -- (xi3);
\draw (xj) -- (xj1);
\draw (xj) -- (xj2);
\draw (xj) -- (xj3);

\draw (xi1) -- (xj1);
\draw (xi1) -- (xj2);
\draw (xi1) -- (xj3);
\draw (xi2) -- (xj1);
\draw (xi2) -- (xj2);
\draw (xi2) -- (xj3);
\draw (xi3) -- (xj1);
\draw (xi3) -- (xj2);
\draw (xi3) -- (xj3);
\end{tikzpicture}
\caption{The sets $V_i, V_j$ (left) and the corresponding edge gadget used in the proof of Theorem \ref{thm:w1hard} (right, the numbers in circles denote the coloring $\varphi$, and the number beside circles denote lists $L$).}
\label{fig:edgegadget}
\end{figure}
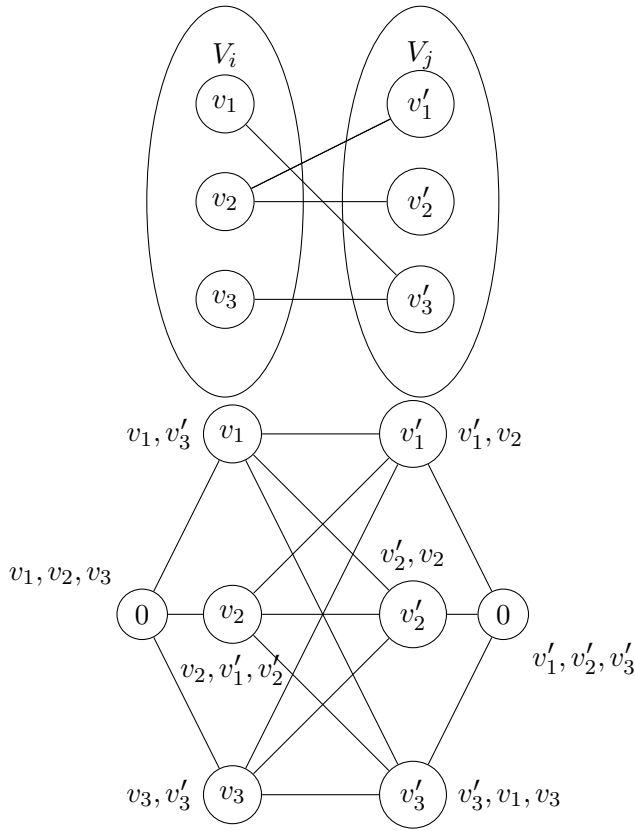

Note that the total number of vertices in $G$ is at most $k + \sum_{u_iu_j \in E(P)} (|V_i| + |V_j|) = k + 3\sum_{i \in V(P)} |V_i| = k + 3n = O(n)$. We claim  that $\varphi$ is at distance at most $\ell=4k$ for some proper list coloring of $G$ with lists $L$ if and only if $(H,P)$ is a {\sc Yes}-instance of \msi.

First, assume that there is a mapping $\psi \colon V(P) \to V(H)$, such that $\psi(u_i) \in V_i$ for any $u_i \in V(P)$, and $\psi(u_i)\psi(u_j) \in E(H)$ for any $u_iu_j \in E(P)$. 
Let $\varphi'$ be the coloring of $G$ defined as follows:
\[
\varphi'(x) = 
\begin{cases}
\psi(u_i) &\text{ if } x = x_i\\
\psi(u_j) &\text{ if } x = x_{i,j}^v \text{ and } \psi(u_i)=v\\
\varphi(x) &\text{ otherwise}.
\end{cases}
\]

First, note that $\varphi'$ respects lists $L$. Clearly $\psi(u_i) \in V_i = L(x_i)$ for every $1 \leq i \leq k$.
Consider $(i,j)$, such that $u_iu_j \in E(P)$ and let $\psi(u_i)=v$ and $\psi(u_j)=v'$. Then $\psi(u_j) = v'$ is a neighbor of $v$ in $H$, which implies that $v' \in L(x_{i,j}^v)$. Finally, we have $\varphi(x_{i,j}^v) = v \in L(x_{i,j}^v)$.

Now suppose that $\varphi'$ is not a proper coloring, i.e., there is an edge of $G$, whose endpoints get the same color.
First, assume that this edge is of the form $x_ix_{i,j}^v$ and $\varphi'(x_i) = \varphi'(x_{i,j}^v) = v'$. If $v = v'$, then we know that $x_{i,j}^v$  was initially colored with $v$, but was recolored to some color from $V_j$, so such a conflict is not possible. If $v \neq v'$, then $x_{i,j}^v$ did not change its color, so $\varphi'(x_{i,j}^v) = v \neq v'$, a contradiction.

So assume that we have $\varphi'(x_{i,j}^v) = \varphi'(x_{j,i}^{v'})$. Observe that this means, that exactly one of $x_{i,j}^v,x_{i,j}^{v'}$ was recolored, since otherwise they have colors from different sets $V_i,V_j$. Suppose $x_{i,j}^v$ was recolored to the color $v'$, which means that $\psi(u_i)=v$ and $\psi(u_j)=v'$. However, by the definition of $\varphi'$, the vertex $x_{j,i}^{v'}$ was recolored to obtain the color $v$, a contradiction.

Finally, observe that to obtain $\varphi'$, we recolored each vertex $x_i$ and two vertices from each edge gadget, which gives $k + 3/2 \cdot k \cdot 2 = 4k = \ell$ recoloring operations in total.

Now let us show that the converse implication holds. Suppose that there exists a proper coloring $\varphi'$ of $G$ with lists $L$, such that $|\varphi \ominus \varphi'|\leq \ell$. Observe that for every $x_i$ we have $\varphi(x_i) \neq \varphi'(x_i)$, otherwise $\varphi'$ does not respect lists $L$. Moreover, if $\varphi'(x_i) = v$ for some $v \in V_i$, then for every $j$ the vertex $x_{i,j}^v$ has to be recolored as well. This means that the minimum number of recoloring operations is $k + 3/2 \cdot k \cdot 2 = 4k = \ell$, and this is possible only if $\varphi \ominus \varphi'$ consists of vertices $x_1,x_2,\ldots,x_k$ and exactly two vertices from every edge gadget.

We claim that $\psi \colon V(P) \to V(H)$, defined by $\psi(u_i) = \varphi'(x_i)$ is the desired subgraph isomorphism from $P$ to $H$. By the choice of lists $L$, we have $\psi(u_i) \in V_i$. So suppose that we have an edge $u_iu_j$ of $P$, such that $\psi(u_i) = v$ and $\psi(u_j)=v'$, and $vv' \notin E(H)$. Observe that since $\varphi'(x_i) = \psi(u_i)$ is $v$, we have $\varphi(x_{i,j}^v) \neq \varphi'(x_{i,j}^v)$ and analogously $\varphi(x_{j,i}^{v'}) \neq \varphi'(x_{j,i}^{v'})$. Moreover, no other vertex from the edge gadget corresponding to $u_iu_j$ was recolored to obtain $\varphi'$. This is only possible if $\varphi'(x_{i,j}^v)=v'$ and $\varphi'(x_{j,i}^{v'})=v$, which in turn implies that $vv' \in E(H)$, since $\varphi'$ respects lists $L$. 

To transform the graph $G$ with lists $L$ and initial coloring $\varphi$ into an instance of the \fix problem we will use Observation \ref{obs-list}. For every vertex $x$ of $G$ and every color $c$ \emph{not appearing} in $L(x)$, we introduce $k+1$ vertices, colored $c$, which are private neighbors of $x$. Let $G'$ denote the obtained graph.

Observe that since $P$ is bipartite, $G'$ is bipartite as well.
The total number of vertices of $G'$ is $O(n \cdot k \cdot n) = O(n^3)$. Summing up, an algorithm solving the \fix problem in time $f(\ell) \cdot |V(G')|^{o(\ell / \log \ell)}$ could be used to decide if $(H,P)$ is a {\sc Yes}-instance of \msi in time  = $f(4k) \cdot (n^4)^{o(3k / \log 4k)} = f'(k) \cdot n^{o(k / \log k)}$, which would in turn contradict the ETH.
\end{proof}

Since $r \leq n$, we obtain the following corollary lower bound, which almost matches the bound from Theorem \ref{fpt-algo}.

\begin{thm}\label{cor:almost-optimal-k}
The \fix problem with $r$ colors and $k$ recoloring operations cannot be solved in time $f(k) \cdot r^{o(k / \log k)}$ for any computable function $f$, unless the ETH fails, even if the input graph is bipartite.
\end{thm}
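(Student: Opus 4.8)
The plan is to obtain this lower bound as an immediate consequence of Theorem~\ref{thm:w1hard}, rather than by constructing a fresh reduction. The only extra ingredient is the normalization already used to pass from Theorem~\ref{fpt-algo} to Corollary~\ref{xp}: in any instance of \fix one may assume $r \leq n$, where $n$ is the number of vertices. Colors unused by $\varphi$ and unforced by the lists can be dropped; equivalently, the palette can be shifted down so that at most $n$ colors remain.

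I would argue by contraposition. Suppose some algorithm $\mathcal{A}$ solved \fix on every instance in time $f(k) \cdot r^{o(k/\log k)}$ for a computable $f$ (with the usual polynomial factor in $n$ tacitly included). Running $\mathcal{A}$ on the bipartite instances produced in the proof of Theorem~\ref{thm:w1hard} and substituting $r \leq n$ gives $r^{o(k/\log k)} \leq n^{o(k/\log k)}$; any residual factor $n^{\O(1)}$ then merges into the exponent, since $c + o(k/\log k) = o(k/\log k)$ for every constant $c$. Hence $\mathcal{A}$ would run in time $f(k) \cdot n^{o(k/\log k)}$ on those instances, which is exactly the running time ruled out by Theorem~\ref{thm:w1hard} under the ETH. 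Because the instances there are already bipartite, the ``even if the input graph is bipartite'' clause carries over without extra work.

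The step requiring the most care is the direction of the inequality, which is mildly counterintuitive. Since $r \leq n$, the quantity $r^{o(k/\log k)}$ is the \emph{smaller} of the two running-time bounds, so one must be sure that excluding the \emph{larger}, $n$-based bound of Theorem~\ref{thm:w1hard} really does exclude the smaller, $r$-based bound claimed here. This is indeed the correct direction: an algorithm fast enough to meet the $r$-bound is a fortiori fast enough to meet the $n$-bound, so the hypothetical $\mathcal{A}$ is precisely the object forbidden by the theorem. The only remaining routine check is that the parameter $k$ in the corollary is literally the \fix recoloring budget, identical to the one in Theorem~\ref{thm:w1hard}, so that no rescaling of $k$ is needed — the factor $4$ arising from the reduction is already absorbed inside the statement of that theorem.
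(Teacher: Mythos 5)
Your proposal is correct and coincides with the paper's own argument: the paper derives this theorem in a single line from Theorem~\ref{thm:w1hard} by observing that $r \leq n$ on the hard bipartite instances, which is exactly your contrapositive substitution $r^{o(k/\log k)} \leq n^{o(k/\log k)}$. Your extra checks (direction of the inequality, absorption of the $n^{\O(1)}$ factor, and the identification of $k$ with the recoloring budget of Theorem~\ref{thm:w1hard}) are all sound and merely make explicit what the paper leaves implicit.
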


\subsubsection{Kernelization}

Corollary \ref{fpt} yields that the \fix problem (parameterized by $k+r$) admits a kernel of size $\left( 2(r-1) \right)^k$ (see \cite[Chapter 2]{book} for more information about kernels). The next theorem shows, that a polynomial kernel for this problem cannot be obtained (under some standard complexity assumptions).

\begin{thm} \label{no-poly-kernel}
For any $r\geq 3$, the \rfix{r} parameterized by the number $k$ of allowed recoloring operations, does not admit a polynomial kernel, unless {\em NP} $\subseteq$ {\em coNP / poly}.
\end{thm}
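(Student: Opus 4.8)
The plan is to use the cross-composition framework from Theorem~\ref{thm-no-kernel-framework}. Since we already know (Theorem~\ref{thm:npc}) that \rfix{r} is NP-hard for every $r \geq 3$, it suffices to exhibit an NP-hard source problem $\Pi$ that cross-composes into \rfix{r} parameterized by $k$. A natural choice of source is \preext{3} (or any NP-hard restriction of \rfix{r} itself), for which we already have the reduction of Lemma~\ref{rpreext-reduction} turning a single precoloring-extension instance into an \rfix{r} instance with recoloring budget equal to the number of vertices of the original graph. The goal of the composition is to take $t$ instances $\mathcal{I}_1,\dots,\mathcal{I}_t$ and build one \rfix{r} instance whose parameter $k^*$ is polynomial in $(\max_i |V(G_i)| + \log t)$, and which is a \textsc{Yes}-instance iff some $\mathcal{I}_i$ is.

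The main technical work, and the step I expect to be the hardest, is the OR-selection: we must allow the solver to ``pick'' one of the $t$ instances to satisfy, while only the chosen instance consumes recoloring budget, and while the unchosen instances cost nothing. First I would preprocess so that all $t$ input instances share the same number of vertices $N$ (by padding), using that cross-composition permits grouping inputs by size and that padding an \preext-type instance is cheap. I would then place disjoint copies of the $t$ per-instance gadgets (as produced by the reduction of Lemma~\ref{rpreext-reduction}) into one graph, and attach a small \emph{selector} gadget, realized via the private-neighbour/list technique of Observation~\ref{obs-list} and the \listfix viewpoint, so that exactly one copy is ``activated.'' The budget is set to $k^* = N + \mathcal{O}(\log t)$ (or a fixed polynomial in $N$ plus the $\log t$ term needed to address one of $t$ branches), which respects the parameter bound $k^* \leq p(\max_i |V(G_i)| + \log t)$. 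The correctness direction that needs care is showing that the cheapest way to fix the composed coloring is to commit the whole budget to a single copy: the private-neighbour gadgets force every vertex of a copy to stay within its intended colour list, so that recolouring vertices across two different copies can never reach a proper colouring within budget, mirroring the counting argument used in the proof of Theorem~\ref{thm:w1hard}.

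Concretely, for the forward direction I would argue that if some $\mathcal{I}_i$ is a \textsc{Yes}-instance, then extending its precolouring gives a proper colouring of the $i$-th copy at distance at most $N$ from $\varphi$, while all other copies are left in a state that is already proper under $\varphi$ after paying only for the selector; hence the total distance is within $k^*$. For the reverse direction, I would use Observation~\ref{obs-list} together with the budget being just barely large enough so that a proper colouring of the whole composed graph within $k^*$ recolourings must fix exactly one copy completely and leave the rest untouched; restricting such a witness to that copy and undoing the reduction of Lemma~\ref{rpreext-reduction} yields a solution to the corresponding $\mathcal{I}_i$. The bipartiteness claim carries over because the per-instance gadgets and the private-neighbour attachments all preserve bipartiteness, exactly as in Theorem~\ref{thm:npc} and Theorem~\ref{thm:w1hard}.

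Finally, having established the cross-composition with $k^* = \mathrm{poly}(N + \log t)$, Theorem~\ref{thm-no-kernel-framework} immediately yields that \rfix{r} parameterized by $k$ admits no polynomial kernel unless \textrm{NP} $\subseteq$ \textrm{coNP / poly}, which is the claimed statement. The two points I would double-check in the write-up are that the composition algorithm runs in time polynomial in $\sum_i |G_i|$ (so that the exponentially-many private neighbours, $\mathcal{O}(k^*)$ per vertex, do not blow up the running time beyond polynomial in the total input size) and that the selector gadget genuinely forces an exclusive OR rather than merely an inclusive one; the budget-tightness argument is what couples these two requirements together.
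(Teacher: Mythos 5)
Your overall plan---a cross-composition of \preext{3} into \rfix{r} built from a logarithmic-cost selector, per-instance gadgets activated by the selector, lists realized through private neighbours via Observation~\ref{obs-list}, and a budget of the form $n+\O(\log t)$---is exactly the strategy of the paper's proof (which takes $2^t$ instances, uses a complete binary tree of depth $t$ as the selector, and sets $k=t+n+8$). However, one concrete step of your plan fails as written: you propose to place disjoint copies of the gadgets ``as produced by the reduction of Lemma~\ref{rpreext-reduction}.'' In that reduction the copy of $G$ is initially monochromatic (every vertex colored $1$), so its initial coloring is improper; hence \emph{every} one of the $t$ copies would contain conflict edges, and by Observation~\ref{conflict} each copy forces recolorings no matter which instance you intend to solve. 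The unchosen copies then do not ``cost nothing,'' and the total budget would have to grow linearly in $t$, destroying the required bound $k^* \leq p(\max_i |V(G_i)| + \log t)$. What the composition actually needs---and what constitutes the main technical content of the paper's proof---is an instance gadget that is \emph{properly} colored at the start and becomes constrained only upon activation: the paper augments each copy of $G$ with ten auxiliary vertices $v_1,\dots,v_{10}$ carrying carefully chosen two-element lists, so that recoloring $v_1$ (triggered by the selected leaf of the tree) sets off a forced cascade which, at a fixed extra cost of $8$, pins each precolored class $U_j$ to color $j$; only then is fixing the gadget within $n+8$ recolorings equivalent to extending the precoloring. Your write-up does acknowledge that inactive copies must start proper, but that requirement is incompatible with the gadget you chose to reuse, and you supply no replacement construction, so the heart of the argument is missing.

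Two smaller points. The selector also needs a concrete realization: the paper roots a complete binary tree of depth $t$, properly colored so that the two children of each node receive different colors, and gives the root the list $\{2,3\}$ while initially coloring it $1$; any fix of cost at most $t+1$ must then recolor a root-to-leaf path, and the selected leaf's new color activates a suitably color-rotated variant of the attached instance gadget (the rotation per leaf color is another detail your sketch leaves open). Your worry about enforcing an ``exclusive OR'' is unnecessary: cross-composition only requires the inclusive equivalence, since a within-budget witness must activate at least one gadget, and restricting the witness to that gadget already yields a {\sc Yes}-instance of \preext{3}. Finally, your running-time check is sound: the pendant private neighbours contribute $\O(k^*)$ vertices per vertex, giving a composed graph of size $\O\left(2^t(n+\log 2^t)\right)$, which is polynomial in the total input size, matching the paper's accounting.
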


\begin{proof}

Let $t,n>0$ be two positive integers and let $\mathcal{S} = \{ (G_i, U_i, \varphi_{U_i})\}_{i \in [2^t]}$ be a set of $2^t$ instances of \preext{3} such that  $G_i$ is bipartite and $|V(G_i)| = n$, for all $i \in [2^t]$. Let $k = t + n + 8$. Our reduction constructs from $\mathcal{S}$ a graph $H$ and a (non-proper) coloring $\varphi$ of $H$. This construction satisfies that $(H', k, \varphi)$ is a {\sc Yes}-instance of \rfix{r} if and only at least one of the instances in $\mathcal{S}$ is a {\sc Yes}-instance of \preext{3}. 
Since \preext{3} on bipartite graphs is NP-complete \cite{Kratochvil1993}, by Theorem \ref{thm-no-kernel-framework}, this reduction implies that \rfix{r} parameterized by the number of allowed recoloring operations does not admit a polynomial kernel, unless NP $\subseteq$ coNP / poly.

We will again start with creating an instance $(H,k,\varphi,L)$ of \listfix and then we will use Observation \ref{obs-list} to transform it into an equivalent instance of \fix.

The construction of $H$ and $\varphi$ considers two gadgets: the {\it problem selector gadget} and the {\it instance gadget}. 
The {\it problem-selector} gadget consists of a complete rooted binary tree of depth $t$, where the root of the tree has list $\{2,3\}$ and the remaining vertices have lists $\{1,2,3\}$. The root of the tree is colored $1$ in $\varphi$, and the rest of the nodes are $3$-colored in such a way that  the coloring restricted to the nodes in the tree is proper, and the children of a node have different colors in $\varphi$ (see Figure \ref{fig:problemselectorgadget} for an example when $t=3$).
Any proper recoloring obtained from $\varphi$  after at most $t+1$ recoloring operations corresponds to a path from the root to a leaf, where each node in the path takes the color of its child also belonging to the path, except for the leaf that is free to pick any color different than its initial one. The leaf picked to be the one that changes its color is called the {\it selected leaf}. The idea is to attach to each leaf an {\it instance gadget}, one for each $(G,U, \varphi_U) \in \mathcal{S}$ (note that the number of leaves is $2^t$, which is equal to $|\mathcal{S}|$). 
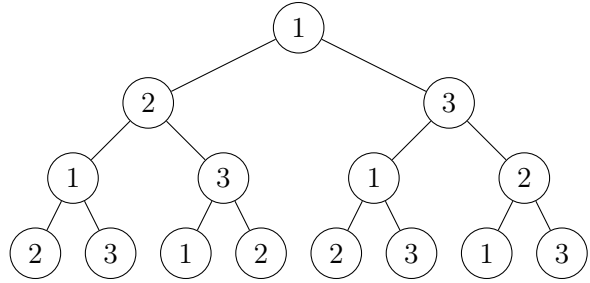
\begin{figure}
\centering
\begin{tikzpicture}[scale = 1]
\node [draw,circle] (v1) at (-1.5,8.5) {1};
\node [draw,circle] (v2) at (-3.5,7.5) {2};
\node [draw,circle] (v5) at (0.5,7.5) {3};
\node [draw,circle] (v3) at (-4.5,6.5) {1};
\node [draw,circle] (v4) at (-2.5,6.5) {3};
\node [draw,circle] (v6) at (-0.5,6.5) {1};
\node [draw,circle] (v7) at (1.5,6.5) {2};
\draw  (v1) edge (v2);
\draw  (v2) edge (v3);
\draw  (v2) edge (v4);
\draw  (v1) edge (v5);
\draw  (v5) edge (v6);
\draw  (v5) edge (v7);
\node [draw,circle] (v8) at (-5,5.5) {2};
\node [draw,circle] (v9) at (-4,5.5) {3};
\node [draw,circle] (v10) at (-3,5.5) {1};
\node [draw,circle] (v11) at (-2,5.5) {2};
\node [draw,circle] (v12) at (-1,5.5) {2};
\node [draw,circle] (v13) at (0,5.5) {3};
\node [draw,circle] (v14) at (1,5.5) {1};
\node [draw,circle] (v15) at (2,5.5) {3};
\draw  (v8) edge (v3);
\draw  (v9) edge (v3);
\draw  (v10) edge (v4);
\draw  (v11) edge (v4);
\draw  (v12) edge (v6);
\draw  (v13) edge (v6);
\draw  (v14) edge (v7);
\draw  (v15) edge (v7);
\end{tikzpicture}
\caption{A {\it problem Selector Gadget}, when $t = 3$. The number in the nodes represent the color of the corresponding node for the coloring $\varphi$. The lists of all vertices except for the root are $\{1,2,3\}$, while the list of the root is $\{2,3\}$.}
\label{fig:problemselectorgadget}
\end{figure}

The construction of the {instance gadget} is slightly more complicated. Let $(G,U, \varphi_U)$ be an instance of \preext{3} contained in $\mathcal{S}$, i.e., $G = (X \cup Y, E)$ is a bipartite graph and $U$ is a subset of $V(G)$, colored according to $\varphi_U$. For $j\in [3]$, we define $U_j := \{ u \in U : \varphi_U(u) = j\}$. The instance gadget of $(G,U, \varphi_U)$ is constructed as follows (see  Figure \ref{fig:instancegadget}): pick a copy of $G$ and add ten more nodes called $\{v_1, \dots, v_{10}\}$ with edges $v_1v_2$, $v_1v_3$, $v_2v_5$, $v_2v_6$, $v_3v_4$, $v_4v_7$, and $v_4v_8$. Moreover, we add all edges $v_5u,v_7u$ for $u \in U_2 \cup U_3$, all edges $v_6u,v_8u$ for $u \in U_1 \cup U_3$, and finally all edges $v_9u,v_{10}u$ for $u \in U_1 \cup U_2$.
The lists are set as follows: $L(v_1)=L(v_3)=\{1,2\}$; $L(v_2)=L(v_6)=L(v_8)=\{2,3\}$; $L(v_5)=L(v_4)=L(v_7)=\{1,3\}$; $L(v_9)=L(v_{10})=\{3\}$.
Finally, color all the nodes in $X$, $v_1$ and $v_4$ with color $1$, all the nodes in $Y$, $v_2$ and $v_3$ with color $2$, and $v_5$, $v_6$, $v_7$, $v_8$, $v_9$, $v_{10}$ with color $3$. 
Observe that the only colors apperaing in $\varphi$ and in the lists $L$ are $\{1,2,3\}$, so we will never use any other color.

Notice that the instance gadget is properly colored. Suppose that for some reason (this will be forced by the problem-selector gadget) the vertex $v_1$ must be recolored to a color different than $1$.  We will show that in that case, we can obtain a proper coloring of the nodes in the instance gadget changing colors of at most $n + 8$ vertices if and only if $(G,U, \varphi_U)$ is a {\sc Yes}-instance of \preext{3}, where $n = |G|$. Indeed, since the color 3 is forbidden to $v_1$ and it must change its color, it must get color $2$. This implies that also $v_2$ and $v_3$ must be recolored. Since the color 1 is forbidden for $v_2$ and the color 3 is forbidden for $v_3$, $v_2$ must change to the color $3$, and $v_3$ to the color $1$. Following the same arguments, $v_4$ must be recolored $3$, and hence $v_5$ and $v_7$ must be recolored to $1$, and $v_6$ and $v_8$ must be recolored to $2$. Note that $v_9$ and $v_{10}$ must stay in color $3$ since this is the only color in their lists. Notice that this situation implies that all nodes in $U_i$ must be recolored $i$, for $i\in \{1,2,3\}$. Therefore, the instance gadget can be properly recolored, if $G$ can be properly colored respecting the pre-coloring of $U$, i.e. if $(G, U, \varphi_U)$ is a {\sc Yes}-instance of \preext{3}. 
Note that the number of recoloring operations performed is at most $n+8$.

\begin{figure}[h]
\centering
\begin{tikzpicture}[scale = 0.8]

\draw  (-0.5,4) ellipse (1.5 and 3.5);

\draw[thick, dotted]  (-5,6.5) rectangle (0,5.5);
\draw[thick, dotted]  (-5,4.5) node (v9) {} rectangle (0,3.5);
\draw[thick, dotted]  (-5,2.5) rectangle (0,1.5);

\draw  (-4.5,4) ellipse (1.5 and 3.5);

\node (v11) at (-7,6.7){$v_5$};
\node[draw,circle] (v1) at (-7,6) {$3$};
\node [left  =0.01 of v1] {$1,3$};

\node (v11) at (-9.2,4.7){$v_6$};
\node[draw,circle] (v2) at (-9,4) {$3$};
\node [below  =0.01 of v2] {$2,3$};

\node (v11) at (-7,1.7){$v_9$};
\node[draw,circle]  (v3) at (-7,1){$3$};
\node [left  =0.01 of v3] {$3$};

\node (v11) at (2,6.7){$v_7$};
\node[draw,circle] (v4) at (2,6) {$3$};
\node [right  =0.01 of v4] {$1,3$};

\node (v11) at (4.2,4.7){$v_8$};
\node[draw,circle] (v5) at (4,4) {$3$};
\node [below  =0.01 of v5] {$2,3$};

\node (v11) at (2,1.7){$v_{10}$};
\node[draw,circle] (v6) at (2,1) {$3$};
\node [right  =0.01 of v6] {$3$};

\node (v11) at (-9.3,7.5) {$v_2$};
\node[draw,circle] (v7) at (-8.5,7.5) {$2$};
\node [above  =0.01 of v7] {$2,3$};

\node (v11) at (4.3,7.5) {$v_4$};
\node[draw,circle] (v8) at (3.5,7.5) {$1$};
\node[above  =0.01 of v8] {$1,3$};

\node (v11) at (1.5,8.3) {$v_3$};
\node[draw,circle] (v9) at (1.5,9) {$2$};
\node[above  =0.01 of v9] {$1,2$};

\node (v11) at (-2.5,11.2) {$v_1$};
\node[draw,circle] (v10) at (-2.5,10.5) {$1$};
\node[below  =0.01 of v10] {$1,2$};

\node at (-2.5,6) {\large $U_1$};
\node at (-2.5,4) {\large $U_2$};
\node at (-2.5,2) {\large $U_3$};

\draw  (v10) edge (v7);
\draw  (v7) edge (v1);
\draw  (v7) edge (v2);
\draw  (v4) edge (v8);
\draw  (v5) edge (v8);
\draw  (v8) edge (v9);
\draw  (v9) edge (v10);
\node (v11) at (-5,4) {};
\node (v13) at (-5,6) {};
\node (v12) at (-5,2) {};
\node (v16) at (0,2) {};
\node (v14) at (0,4) {};
\node (v15) at (0,6) {};

\draw[thick, dotted]  (v1) edge (v11);
\draw[thick, dotted]  (v1) edge (v12);
\draw[thick, dotted]  (v2) edge (v13);
\draw[thick, dotted]  (v2) edge (v12);
\draw[thick, dotted]  (v3) edge (v11);
\draw[thick, dotted]  (v3) edge (v13);
\draw[thick, dotted]  (v6) edge (v14);
\draw[thick, dotted]  (v6) edge (v15);
\draw[thick, dotted]  (v5) edge (v16);
\draw[thick, dotted]  (v4) edge (v14);
\draw[thick, dotted]  (v4) edge (v16);

\draw [thick, dotted] (v5) edge (v15);

\node at (-4.5,0) {\Large $X$};
\node at (-0.5,0) {\Large $Y$};
\end{tikzpicture}
\caption{Representation of an {\it instance gadget} of $G = (X \cup Y, E)$, with precolored set $U = U_1 \cup U_2 \cup U_3$. The numbers in the nodes represent their color in the initial coloring. The numbers next to the nodes denote the lists of allowed colors.}
\label{fig:instancegadget}
\end{figure}
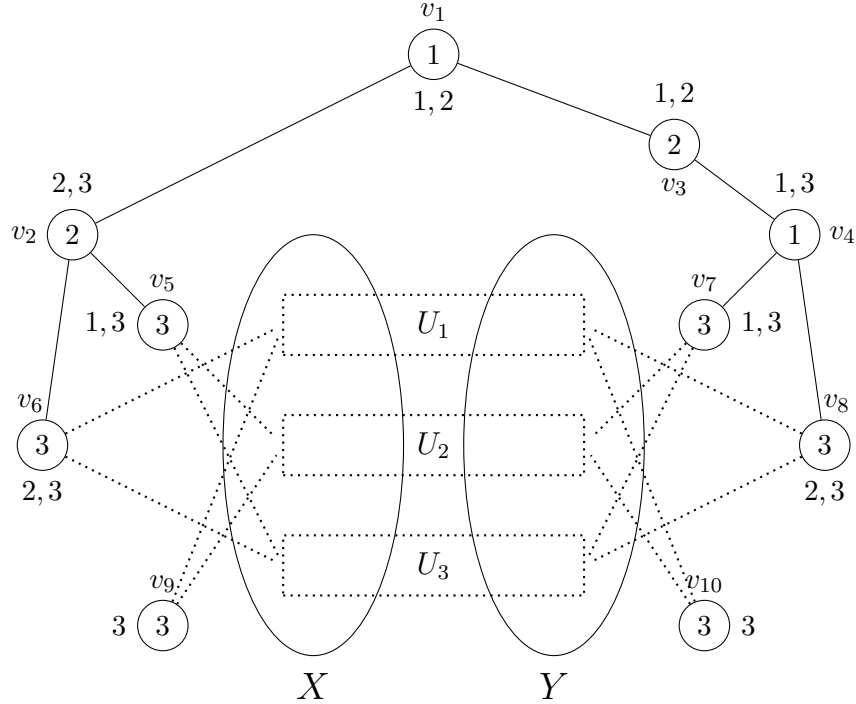

Then, we construct $H$ and $\varphi$ as follows: we start with building the problem-selector gadget. Then we identify each leaf colored $1$ with the node $v_1$ in a copy of the instance gadget built from one of the instances in $\mathcal{S}$ (we have an instance gadget for each instance). For the leaves colored $2$ (resp. $3$), we make an analogous version of the instance gadget, where the nodes colored $1$ change to $2$ (resp. $3$), the nodes colored $2$ change to $3$ (resp. $1$), the nodes colored $3$ change to $1$ (resp. $2$), and where $U_1, U_2, U_3$ are changed to $U_2, U_3, U_1$ (resp. $U_3, U_1, U_2$). Then we identify each leaf colored $2$ or $3$ with the node $v_1$ in a copy of the corresponding version of the instance gadget, built from one of the instances in $\mathcal{S}$.  Note that the size of $H$ is $2^t\cdot (n + 12(k+1) + 11) + k = \mathcal{O}(|S|(n + \log|S|))$, and it can be constructed in polynomial time in $|S|$ and $n$. 

We claim that $(H, t+n+8, \varphi)$ is a {\sc Yes}-Instance of \rfix{r} if and only if there exists a {\sc Yes}-instance of \preext{3} in $\mathcal{S}$. Indeed, if a recoloring of $H$ from the initial coloring $\varphi$ forces (by the properties of the problem selector gadget) to pick a path  from the root of the binary tree to a leaf, hence recoloring at least $t$ vertices. Recoloring the selected leaf {\it activates} the corresponding instance gadget, where the recoloring with at most $n+8$ changes implies that the underlying instance of \preext{3} is a {\sc Yes}-instance. In the converse, if there exists a {\sc Yes}-instance of \preext{3} in $\mathcal{S}$, then the coloring of the corresponding instance gadget (requiring to recolor at most $n+8$ vertices), and later a path from the corresponding leaf to the root (requiring to recolor $t$ vertices), is a witness for the instance $(H, t+n+8, \varphi)$ of \rfix{r}. 
%
\end{proof}

Observe that if all instances in $\mathcal S$ are bipartite, then also $H$ is bipartite. Thus we obtain the following.

\begin{cor}
For any $r \geq 3$, the \rfix{r} problem parameterized by $k$, does not admit a polynomial kernel, unless {\em NP} $\subseteq$ {\em coNP / poly}, even if the input graph is bipartite.
\end{cor}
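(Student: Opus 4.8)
The plan is to read this corollary as Theorem~\ref{no-poly-kernel} carried out under one extra structural constraint: it suffices to show that whenever every input instance of \preext{3} in $\mathcal{S}$ is bipartite, the graph produced by the cross-composition is bipartite as well. By Kratochv\'il's result \cite{Kratochvil1993}, \preext{3} remains NP-complete on (planar) bipartite graphs, so we are free to assume that all $G_i$ in $\mathcal{S}$ are bipartite, and then the entire argument of Theorem~\ref{no-poly-kernel} applies verbatim once bipartiteness of the output is established. Hence the whole task reduces to exhibiting a proper $2$-colouring (a bipartition) of the constructed graph $H$, and of its \fix-version.

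I would dispose of the two easy ingredients first. The problem-selector gadget is a complete rooted binary tree, hence acyclic, hence bipartite; a bipartition is read off from the parity of the depth of each node. At the very end, the passage from the \listfix instance to the \fix instance via Observation~\ref{obs-list} only attaches to each vertex $k+1$ pendant private neighbours, each of degree one; pendant vertices lie on no cycle, so they can never create an odd cycle and are simply placed on the side opposite to their unique neighbour. Both steps preserve bipartiteness routinely, and the leaf-to-$v_1$ identifications merely glue the instance gadgets onto the leaves of the tree, so the only remaining question is internal to a single instance gadget.

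The real work, and the step I expect to be the main obstacle, is the instance gadget. One must $2$-colour the auxiliary vertices $v_1,\dots,v_{10}$ together with the embedded copy of $G=(X\cup Y,E)$ and the precoloured classes $U_1,U_2,U_3\subseteq V(G)$, consistently with the side already forced on the attaching vertex $v_1$. The delicate point is that each of $v_5,\dots,v_{10}$ is joined to a \emph{union of two} of the classes $U_1,U_2,U_3$, so the parity forced on these auxiliary vertices is coupled to the sides that $U$ occupies in the bipartition of $G$; moreover the two ``duplicate'' enforcers $\{v_5,v_6\}$ and $\{v_7,v_8\}$ hang off the two branches of the small tree on $v_1,\dots,v_8$, and a common neighbour $u\in U_2\cup U_3$ of $v_5$ and $v_7$ (and likewise through the other class pairs) threatens to close a cycle whose length depends on the distance between the two enforcers in that tree. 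The plan is therefore to (i) normalise the instances so that all precoloured vertices of each $G_i$ lie on a single side of its bipartition, and (ii) trace every cycle passing through $v_1,\dots,v_{10}$ and a vertex of $U$ and verify that each has even length. This parity bookkeeping is exactly where an odd cycle can sneak in; if the two enforcers reach a shared vertex of $U$ at clashing parities, the repair is a local modification of the gadget that lengthens one branch by the appropriate parity, an adjustment that changes the number of forced recolourings in the cascade by a constant and is absorbed by a corresponding shift of the budget $k$, leaving the yes/no analysis of Theorem~\ref{no-poly-kernel} intact. Once such a bipartition is produced, the statement follows at once from the non-existence of a polynomial kernel proved in Theorem~\ref{no-poly-kernel}.
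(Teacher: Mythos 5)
Your route is exactly the paper's: the paper derives this corollary from Theorem~\ref{no-poly-kernel} with the single remark that if all instances in $\mathcal{S}$ are bipartite then $H$ is bipartite, and you likewise reduce everything to checking bipartiteness of the cross-composition output (the selector tree and the pendant vertices added via Observation~\ref{obs-list} are indeed the trivial parts). The notable difference is that the step you flag as ``the main obstacle'' is a genuine flaw, and the paper's one-line observation is in fact \emph{false} for the instance gadget as written: in Figure~\ref{fig:instancegadget} the enforcers $v_5$ and $v_7$ sit at distances $2$ and $3$ from $v_1$ (via $v_1v_2v_5$ and $v_1v_3v_4v_7$), yet every $u \in U_2 \cup U_3$ is a common neighbour of both, so $v_1v_2v_5uv_7v_4v_3v_1$ is a cycle of length $7$; the pair $v_6,v_8$ closes the same odd cycle through any vertex of $U_1 \cup U_3$. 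Moreover, since each of $v_5,\dots,v_{10}$ is adjacent to a whole union $U_i \cup U_j$, bipartiteness also forces all precoloured vertices of a connected component to lie on a single side of its bipartition (otherwise $v_9$, say, together with an odd $u$--$u'$ path inside $G$ yields an odd cycle), and the paper never arranges this. So your parity bookkeeping is not pedantry: the corollary as justified in the paper needs precisely the two repairs you propose, and your attempt correctly identifies where and why.

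Both repairs do go through, though your sketch leaves them as assertions. For your step (i), \preext{3} remains NP-complete on planar bipartite graphs with all precoloured vertices on one side: for each $u \in U \cap Y$ precoloured $c$, un-precolour $u$ and attach two pendant neighbours precoloured with the two colours of $\{1,2,3\} \setminus \{c\}$; these pendants land on the $X$-side, force $\varphi(u)=c$, and preserve planarity and bipartiteness. For your step (ii), rather than an ad hoc lengthening it is cleanest to make the left branch a copy of the right one, i.e.\ $v_1$--$a$--$b$--$\{v_5,v_6\}$ where $a,b$ carry the initial colours and lists of $v_3,v_4$; then all four enforcers lie at odd distance from $v_1$, the forced cascade survives ($a \to 1$, $b \to 3$, $v_5 \to 1$, $v_6 \to 2$), and the budget shifts from $n+8$ to $n+9$, exactly the constant absorption you anticipate. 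Note that this re-verification is necessary and is the one place your proposal stops short of a proof: an arbitrary subdivision destroys the colour-forcing, since the intermediate vertices are what convert the ``forbidden colour'' along the chain, so their initial colours and lists must be redesigned, not merely inserted. With (i) and (ii) in place, one class of the bipartition takes $v_1$, its even-distance tree vertices and all of $X$ (hence every enforcer's neighbourhood), the other takes the enforcers, $v_9,v_{10}$ and $Y$, and the rest of the argument of Theorem~\ref{no-poly-kernel} applies verbatim.
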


\subsection{Parameterized by $r$ + the treewidth of the input graph} \label{sec:treewidth}

In this section we consider the optimization version of \rfix{r} problem for graphs with bounded treewidth. For more information about tree decompositions and treewidth, the reader is referred to Diestel's book \cite{Diestel}. 

Using a standard dynamic programming (see the survey by Bodlaender and Koster \cite{BK} for some examples) on a tree decomposition, one can obtain the following result.

\begin{thm} \label{thm:tw}
For any fixed $r$, the optimization version of \rfix{r} problem can be solved in time $r^t \cdot n^{\O(1)}$, where $n$ is the number of vertices of the input graph and $t$ is its treewidth.
\end{thm}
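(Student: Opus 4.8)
The plan is to design a dynamic programming algorithm over a nice tree decomposition of the input graph $G$. Recall that in a nice tree decomposition, every bag has size at most $t+1$, and the decomposition is built from leaf bags, introduce-vertex bags, forget-vertex bags, and join bags. For each node $p$ of the decomposition with bag $B_p$, I would process the subgraph $G_p$ induced by all vertices appearing in the subtree rooted at $p$. The natural state to track is, for every possible proper coloring $\sigma \colon B_p \to [r]$ of the current bag, the minimum number of recoloring operations needed to obtain a proper $r$-coloring of $G_p$ that agrees with $\sigma$ on $B_p$. Since each bag has at most $t+1$ vertices and each is assigned one of $r$ colors, there are at most $r^{t+1}$ states per node, which yields the claimed $r^t \cdot n^{\O(1)}$ running time (the polynomial factor absorbing the per-state transition work and the number of nodes, which is $\O(t \cdot n)$).

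Next I would specify the transitions at each node type, always guarding against creating a monochromatic edge inside the bag so that only proper colorings of $G_p$ are counted. At a \emph{leaf} node (empty bag), the unique state has cost $0$. At an \emph{introduce} node adding a vertex $v$ to bag $B$, for each state $\sigma$ on $B \cup \{v\}$ that is proper on the bag (no edge of $G$ between $v$ and $B$ is monochromatic), I set the cost to that of $\sigma$ restricted to $B$ plus the indicator $[\sigma(v) \neq \varphi(v)]$, charging one recoloring exactly when $v$'s assigned color differs from its original color. At a \emph{forget} node removing $v$ from the bag, for each coloring $\sigma'$ of the smaller bag I take the minimum over all colors of $v$ of the stored value for the corresponding extended coloring; this is correct because $v$ has no further neighbors outside $G_p$ by the properties of tree decompositions, so its color is now fixed for the rest of the computation. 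At a \emph{join} node with two children sharing bag $B$, for each coloring $\sigma$ of $B$ I add the two children's values and subtract the contribution of $B$ counted twice, namely $\sum_{u \in B} [\sigma(u) \neq \varphi(u)]$, since each bag vertex's recoloring cost is charged in both subtrees.

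The main obstacle is the correctness of the double-counting correction at join nodes together with the bookkeeping that guarantees every edge of $G$ is checked for properness exactly once across the whole computation. I would argue this carefully using the standard invariant that the intersection of the two child subgraphs $G_{p_1} \cap G_{p_2}$ is exactly the induced subgraph on $B$ and that every edge of $G$ lies in some bag; properness inside bags is enforced at introduce nodes, so no monochromatic edge is ever admitted, and each vertex's original-versus-assigned comparison is counted once net after the subtraction. Finally, reading the answer at the root (whose bag is empty) gives $\fixnumb{\varphi}{r}{G}$ directly, and I would observe that a value exceeding $n$, or equivalently the absence of any proper coloring when $r < \chi(G)$, is correctly reported as the instance being infeasible. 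The formal proof of correctness proceeds by a routine induction on the tree decomposition, matching each stored value to the true minimum recoloring cost of $G_p$ consistent with the bag coloring, so I would state the invariant precisely and verify it for each of the four node types.
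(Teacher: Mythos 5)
Your proposal is correct and follows essentially the same route as the paper, which explicitly skips the proof as ``very standard'' dynamic programming over a nice tree decomposition storing proper $r$-colorings of each bag; you simply flesh out the standard details (minimum recoloring cost per proper bag coloring, with the usual leaf/introduce/forget/join transitions and the double-counting correction at joins). Your version is in fact slightly more precise than the paper's one-line sketch, since for the optimization version one must store the minimum cost for each proper bag coloring rather than merely the set of proper colorings.
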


We skip the proof as the approach is very standard. It is enough to store the set of all proper $r$-coloring of every bag of the nice tree decomposition of the input graph.

Lokshtanov {\em et al.} \cite{LMS} have proven that, assuming the SETH, the simple $r^t \cdot n^{\mathcal{O}(1)}$-time algorithm for the $r$-coloring of an $n$-vertex graph of treewidth at most $t$ is asymptotically optimal.

\begin{thm}[Lokshtanov {\em et al.} \cite{LMS}]
For any $r \geq 3$, the \rcoloring{r} problem cannot be solved in time $(r - \epsilon)^t \cdot n^{\mathcal{O}(1)}$, where $n$ is the number of vertices of the input graph and $t$ is its treewidth, unless the SETH fails.
\end{thm}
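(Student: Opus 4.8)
The plan is to prove the contrapositive and reduce from \textsc{Cnf-Sat}, so that the SETH is the right hypothesis to contradict. Suppose some algorithm solved \rcoloring{r} in time $(r-\epsilon)^{t}\cdot n^{\O(1)}$, where $t$ is the treewidth; since treewidth is at most pathwidth, it suffices to produce, from an arbitrary CNF formula $\phi$ on $N$ variables and $M$ clauses, a graph $G_\phi$ that is $r$-colorable if and only if $\phi$ is satisfiable and whose \emph{pathwidth} is about $N/\log_2 r$. Feeding such a $G_\phi$ to the hypothetical algorithm would then decide satisfiability of $\phi$ too fast.

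The heart of the construction is an efficient encoding of a truth assignment in the colors of a bounded-width ``core''. I would fix a large constant $p$ (depending on $\epsilon$), partition the $N$ variables into $N/p$ groups of size $p$, and represent each group by $\ell=\lceil p/\log_2 r\rceil$ \emph{interface} vertices. For each group I would build a small \emph{group gadget} whose proper $r$-colorings, read off on its interface, are in bijection with the $2^p\le r^\ell$ assignments of that group. The point is that all $r^\ell$ colorings are used, so each interface vertex carries essentially $\log_2 r$ bits and the total number of interface vertices is $(N/p)\lceil p/\log_2 r\rceil=\frac{N}{\log_2 r}\bigl(1+\O(\tfrac{\log_2 r}{p})\bigr)$. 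These interface vertices form a core that will sit in \emph{every} bag of the path decomposition, so that the encoded assignment is globally visible and consistent.

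Clauses are then checked one at a time. For each clause of $\phi$ I would attach a \emph{clause gadget} adjacent only to the interface vertices of the groups containing its literals, designed to be $r$-colorable exactly when at least one literal is satisfied by the assignment those colors encode. To handle clauses with many literals without widening any bag, the clause gadget is itself a path-like ``running OR'' that scans its literals one at a time, touching a single interface vertex at each step; thus it contributes only $\O(1)$ to the local width. Placing the group gadgets and then the clause gadgets sequentially along the path, with the core present throughout, yields a path decomposition of width $\frac{N}{\log_2 r}\bigl(1+\O(\tfrac{\log_2 r}{p})\bigr)+\O(1)$. Correctness is routine: a satisfying assignment colors the core and lets every group and clause gadget be completed, and conversely any proper coloring reads off, via the interface, a satisfying assignment.

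Running the assumed algorithm on $G_\phi$ then solves \textsc{Cnf-Sat} in time $(r-\epsilon)^{\,t}\cdot n^{\O(1)}\le 2^{\,c(p)\,N}\cdot(NM)^{\O(1)}$, where $c(p)\to \log_2(r-\epsilon)/\log_2 r<1$ as $p\to\infty$; choosing $p$ large enough makes $c(p)<1$, i.e. the running time becomes $(2-\epsilon')^{N}\cdot N^{\O(1)}$ for some $\epsilon'>0$, contradicting the SETH. I expect the main obstacle to be the gadget design rather than the accounting: one must simultaneously force the interface vertices to use the \emph{full} palette of $r$ colors (so the encoding attains the tight rate of $\log_2 r$ bits per vertex and the pathwidth really approaches $N/\log_2 r$), realise both the group gadgets and the OR-style clause gadgets in plain, list-free $r$-coloring, and keep every gadget's contribution to a single bag down to $\O(1)$ even for wide clauses. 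Reconciling these three requirements, while absorbing the rounding in $\ell=\lceil p/\log_2 r\rceil$ into the lower-order $\O(\log_2 r/p)$ term, is the delicate part.
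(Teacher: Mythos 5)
The first thing to note is that the paper does not prove this statement at all: it is imported as a black box from Lokshtanov, Marx and Saurabh \cite{LMS}, and used only to derive Corollary \ref{cor:tw-seth} via Lemma \ref{lem:equivalent}. So the only meaningful comparison is with the proof in \cite{LMS}, and your sketch reconstructs precisely that proof's strategy: reduce from {\sc Cnf-Sat}, partition the $N$ variables into groups of size $p$ (a constant depending on $\epsilon$), encode each group's $2^p$ assignments into the $r$-colorings of $\lceil p/\log_2 r\rceil$ interface vertices kept in every bag, verify clauses with constant-width path-like gadgets processed sequentially so that wide clauses do not inflate any bag, bound the pathwidth (hence treewidth) by $\frac{N}{\log_2 r}(1+\O(\frac{\log_2 r}{p}))+\O(1)$, and observe that $(r-\epsilon)^{t}$ then becomes $2^{c(p)N}$ with $c(p)\to \log_2(r-\epsilon)/\log_2 r<1$, contradicting the SETH for $p$ large. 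The accounting and the choice of parameters are exactly as in the cited source.

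That said, the step you explicitly defer --- the gadget design --- is the entire technical content of the cited proof, so as written this is a correct reduction schema rather than a complete argument, and there are two concrete points it would have to resolve. (i) Proper $r$-colorings are invariant under permutations of the color set, so ``the assignment encoded by the interface colors'' is not well defined until the palette is anchored; the standard remedy is to place a rigid palette structure (e.g., a clique on $r$ vertices, or a properly colored backbone forced to a canonical coloring) in the core and wire all gadgets to it, which simulates the list constraints your group and OR gadgets implicitly rely on and adds only $\O(1)$ to every bag since $r$ is a constant. (ii) Your requirement that ``all $r^\ell$ colorings are used'' is both unachievable as stated (a bijection with the $2^p$ assignments would force $r^\ell=2^p$, which generally fails) and unnecessary: all that is needed is an injection of the $2^p$ assignments into the $r^\ell$ interface colorings, together with the guarantee that every interface coloring extendable through the gadgets decodes to a well-defined assignment satisfying the scanned clauses; the rate $\ell=\lceil p/\log_2 r\rceil$ already yields the claimed pathwidth bound. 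With these repairs your outline coincides with the proof in \cite{LMS}.
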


By Proposition \ref{lem:equivalent}, we observe that  the algorithm given by Theorem \ref{thm:tw} is asymptotically optimal as well.

\begin{cor} \label{cor:tw-seth}
For any fixed $r \geq 3$ and any $\epsilon > 0$, there is no algorithm for the  \rfix{r} problem with running time $(r-\epsilon)^{t} \cdot n^{\mathcal{O}(1)}$, where $n$ is the number of vertices of the input graph and $t$ is its treewidth, unless the SETH fails.
\end{cor}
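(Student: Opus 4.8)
The plan is to derive Corollary~\ref{cor:tw-seth} as a direct consequence of the SETH-based lower bound for \rcoloring{r} of Lokshtanov, Marx, and Saurabh, combined with the reduction already packaged in Proposition~\ref{lem:equivalent}. Recall that that proposition states: if \rfix{r} on a graph $G$ can be solved in time $f(G)$, then \rcoloring{r} on $G$ can be solved in time $f(G) \cdot n^{\O(1)}$. The whole argument is a contrapositive: suppose, for the sake of contradiction, that for some fixed $r \geq 3$ and some $\epsilon > 0$ there were an algorithm solving \rfix{r} in time $(r-\epsilon)^{t} \cdot n^{\O(1)}$, where $t$ is the treewidth of the input graph. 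I would then feed this hypothetical algorithm into the reduction of Proposition~\ref{lem:equivalent}.

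The key point to verify is that the reduction behind Proposition~\ref{lem:equivalent} is treewidth-preserving (or at least treewidth-nonincreasing on the relevant instances). Reading that reduction, an instance of \rcoloring{r} on a graph $G$ with $n$ vertices is answered by a single call to \rfix{r} on \emph{the same graph} $G$ with $k = n$: indeed the remark preceding Lemma~\ref{lem:equivalent} observes that $G$ is $r$-colorable if and only if any fixed coloring of $G$ can be fixed within $n$ recolorings. Since the graph passed to the \rfix{r} oracle is literally $G$, its treewidth is exactly $\tw(G) = t$, so no blow-up in the treewidth parameter occurs. This is the step I expect to matter most, because the whole SETH-tightness of the exponent $(r-\epsilon)^t$ hinges on $t$ being transferred unchanged; any constant-factor or additive increase in treewidth would weaken the conclusion. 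Fortunately the reduction is the trivial identity on the graph, so this obstacle dissolves upon inspection.

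Putting the pieces together: the assumed \rfix{r} algorithm would run in time $(r-\epsilon)^{t} \cdot n^{\O(1)}$ on $G$, hence by Proposition~\ref{lem:equivalent} we could solve \rcoloring{r} on the same $n$-vertex graph of treewidth $t$ in time $(r-\epsilon)^{t} \cdot n^{\O(1)} \cdot n^{\O(1)} = (r-\epsilon)^{t} \cdot n^{\O(1)}$. This directly contradicts the theorem of Lokshtanov \emph{et al.}, which rules out such an algorithm for \rcoloring{r} unless the SETH fails. Therefore no such \rfix{r} algorithm can exist (under the SETH), which is exactly the statement of the corollary. I would present this as a short contradiction argument, explicitly noting the identity-on-the-graph nature of the reduction so the reader sees that the treewidth $t$ is genuinely preserved and the exponent base $r-\epsilon$ carries over verbatim.
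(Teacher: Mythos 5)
Your proposal is correct and follows exactly the paper's route: the paper derives Corollary~\ref{cor:tw-seth} in one line by combining the theorem of Lokshtanov \emph{et al.} with Lemma~\ref{lem:equivalent}, whose underlying reduction passes the very same graph $G$ (with $k=n$ and an arbitrary initial coloring) to the \rfix{r} solver, so the treewidth is trivially preserved. Your explicit check of this treewidth-preservation step, which the paper leaves implicit, is the only difference and is a sound elaboration rather than a deviation.
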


\section{Fixing number} \label{sec:number}

Recall that for a graph $G$ and its $r$-coloring $\varphi$, by $\fixnumb{\varphi}{r}{G}$ we denote the minimum number vertices that have to be recolored to obtain some proper $r$-coloring of $G$.

An {\em $r$-fixing number} of a graph $G$ (denoted by $\fixnumb{}{r}{G}$) is a maximum value of $\fixnumb{\varphi}{r}{G}$ over all colorings $\varphi \colon V(G) \to \{1,..,r\}$.
\begin{definition}By $\fixnumber{G}$ we denote the {\em fixing number} of a graph $G$, defined as a maximum value of $\fixnumb{}{r}{G}$ over all $r \geq \chi(G)$.
\end{definition}

\begin{lem} \label{lem:identify}
Let $\varphi$ be some $r$-coloring of $G$ and $\varphi'$ be an $(r+1)$-coloring of $G$ such that $\varphi^{-1}(i) = \varphi'^{-1}(i)$ for $i \in \{1,\ldots,r-1\}$. Then $\fixnumb{\varphi'}{r+1}{G} \leq \fixnumb{\varphi}{r}{G}$.
\end{lem}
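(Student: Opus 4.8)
The plan is to exploit the structural relationship between $\varphi$ and $\varphi'$: the hypothesis $\varphi^{-1}(i)=\varphi'^{-1}(i)$ for $i\in\{1,\dots,r-1\}$ forces $\varphi'$ to be obtained from $\varphi$ by \emph{splitting} the color class of $r$ into two parts, namely the vertices that keep color $r$ and the vertices that receive the fresh color $r+1$. Write $T=\varphi'^{-1}(r+1)$; then $\varphi(v)=r$ for every $v\in T$, and $\varphi$ and $\varphi'$ agree everywhere outside $T$. If $\fixnumb{\varphi}{r}{G}=\infty$ the inequality is trivial, so I assume it equals some finite $k$ and fix an optimal witness, i.e.\ a proper $r$-coloring $\psi$ with $|\varphi\ominus\psi|=k$. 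Set $S=\varphi\ominus\psi$.

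The key construction is to turn $\psi$ into a proper $(r+1)$-coloring $\psi'$ that is close to $\varphi'$ by spending the extra color on exactly the right vertices. Concretely, I would let $\psi'$ agree with $\psi$ except that every vertex of $T\cap\psi^{-1}(r)$ is recolored from $r$ to $r+1$. Since $\psi^{-1}(r)$ is an independent set (as $\psi$ is proper) and no other vertex uses color $r$ or $r+1$, recoloring an arbitrary subset of this class to $r+1$ cannot create a monochromatic edge; hence $\psi'$ is again a proper coloring, now using colors from $[r+1]$.

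It remains to bound $\dist(\varphi',\psi')$ by $k$, and this accounting step is where the argument really happens. Outside $T$ we have $\varphi'=\varphi$ and $\psi'=\psi$, so the disagreements there are exactly $S\setminus T$. On $T$ we have $\varphi'\equiv r+1$: the vertices with $\psi(v)=r$ were recolored to $r+1$ and now \emph{agree} with $\varphi'$, while the vertices of $T\setminus\psi^{-1}(r)$ still disagree. The crucial observation is that each such vertex $v$ satisfies $\varphi(v)=r\neq\psi(v)$, so it already belongs to $S$; thus $T\setminus\psi^{-1}(r)\subseteq S\cap T$. Combining the two regions gives
\[
\dist(\varphi',\psi') = |S\setminus T| + |T\setminus\psi^{-1}(r)| \le |S\setminus T| + |S\cap T| = |S| = k,
\]
and therefore $\fixnumb{\varphi'}{r+1}{G}\le k=\fixnumb{\varphi}{r}{G}$. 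The only delicate point is this last inclusion: one must notice that the disagreements left on $T$ (the vertices not recolored to $r+1$) were already counted as disagreements of the original witness $\psi$, so no distance is lost in passing from $\varphi$ to $\varphi'$.
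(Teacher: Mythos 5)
Your proof is correct and takes essentially the same route as the paper, whose entire proof is the one-line remark that ``recoloring the vertices in the same way as with $\varphi$ makes $\varphi'$ proper'': applying the witness recolorings of $\psi$ to $\varphi'$ yields exactly your coloring $\psi'$, since $T \setminus S = T \cap \psi^{-1}(r)$. Your write-up merely makes explicit the properness check and the distance accounting ($T\setminus\psi^{-1}(r)\subseteq S$) that the paper leaves implicit.
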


\begin{proof}
Recoloring the vertices in the same way as with $\varphi$ makes $\varphi'$ proper.
\end{proof}

\begin{lem}
For all graphs $G$ and $r \geq \chi(G)$ it holds that $\fixnumb{}{r+1}{G} \leq \fixnumb{}{r}{G}$.
\end{lem}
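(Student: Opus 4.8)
The plan is to reduce the statement to Lemma \ref{lem:identify} by associating to each $(r+1)$-coloring an $r$-coloring obtained by identifying two colors. First I would fix an arbitrary $(r+1)$-coloring $\varphi'$ of $G$ and define an $r$-coloring $\varphi$ by recoloring every vertex of color $r+1$ under $\varphi'$ to color $r$, leaving all other vertices unchanged; in other words, I merge the color classes $r$ and $r+1$ into a single class of color $r$. Because this operation modifies only the vertices in $\varphi'^{-1}(r+1)$, it leaves the classes $1,\ldots,r-1$ intact, so $\varphi^{-1}(i) = \varphi'^{-1}(i)$ for every $i \in \{1,\ldots,r-1\}$, and the pair $(\varphi,\varphi')$ therefore satisfies the hypothesis of Lemma \ref{lem:identify}.

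Next I would apply Lemma \ref{lem:identify} to obtain $\fixnumb{\varphi'}{r+1}{G} \leq \fixnumb{\varphi}{r}{G}$. Since $\fixnumb{}{r}{G}$ is, by definition, the maximum of $\fixnumb{\psi}{r}{G}$ over all $r$-colorings $\psi$ of $G$, in particular $\fixnumb{\varphi}{r}{G} \leq \fixnumb{}{r}{G}$, and therefore $\fixnumb{\varphi'}{r+1}{G} \leq \fixnumb{}{r}{G}$. As $\varphi'$ was an arbitrary $(r+1)$-coloring, taking the maximum over all such colorings on the left-hand side yields $\fixnumb{}{r+1}{G} \leq \fixnumb{}{r}{G}$, which is the claim.

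I do not expect a genuine obstacle here. The only points that require care are verifying that merging colors $r$ and $r+1$ leaves the classes $1,\ldots,r-1$ untouched (which is precisely the hypothesis needed to invoke Lemma \ref{lem:identify}) and handling the maximum defining the fixing number by bounding each term $\fixnumb{\varphi'}{r+1}{G}$ separately, rather than first extracting a maximizing coloring. The assumption $r \geq \chi(G)$ guarantees that a proper $r$-coloring exists, so both fixing numbers are finite and the inequality is meaningful.
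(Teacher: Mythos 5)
Your proof is correct and follows essentially the same route as the paper: both merge colors $r$ and $r+1$ to pass from an $(r+1)$-coloring to an $r$-coloring and then invoke Lemma \ref{lem:identify}; the paper simply starts from a maximizing $(r+1)$-coloring, whereas you bound an arbitrary $\varphi'$ and take the maximum afterwards, which is an immaterial difference.
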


\begin{proof}
Let $\varphi'$ be an $(r+1)$-coloring of $G$ such that $\fixnumb{\varphi'}{r+1}{G}= \fixnumb{}{r+1}{G}$. Let $\varphi$ be an $r$-coloring of $G$ obtained from $\varphi'$ by identifying colors $r$ and $r+1$. By Lemma \ref{lem:identify} we obtain the following.
$\fixnumb{}{r+1}{G} = \fixnumb{\varphi'}{r+1}{G} \leq \fixnumb{\varphi}{r}{G} \leq \fixnumb{}{r}{G}$.
\end{proof}

\begin{cor} \label{cor:rischi}
For all graphs $G$ it holds that $\fixnumber{G} = \fixnumb{}{\chi(G)}{G}$.
\end{cor}

Let $G$ be a bipartite graph with bipartition classes $X,Y$ and let $\varphi$ be its 2-coloring.
Recall from Observation \ref{bipartite} that $$\fixnumb{\varphi}{r}{G} = \sum_{ \substack{ C \colon \text{connected} \\ \text{component of }G}} 
\min \{ |\left ( X \ominus \varphi^{-1}(1) \right ) \cap V(C) |,| \left ( X \ominus \varphi^{-1}(2) \right ) \cap V(C)| \}.$$

Note that if $|\left ( X \ominus \varphi^{-1}(1) \right ) \cap V(C) | \geq |C|/2$, then $| \left ( X \ominus \varphi^{-1}(2) \right ) \cap V(C)| \leq |C|/2$ for any connected component $C$ of $G$. Thus we can easily obtain the following corollary.

\begin{cor} \label{cor:bipartite-upper}
$\fixnumber{G} \leq \lfloor n/2 \rfloor$ for every bipartite graph $G$ on $n$ vertices.
\end{cor}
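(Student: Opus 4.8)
The plan is to combine Corollary~\ref{cor:rischi} with the formula from Proposition~\ref{bipartite} (referenced here as Observation~\ref{bipartite}). First I would invoke Corollary~\ref{cor:rischi}, which tells us that $\fixnumber{G} = \fixnumb{}{\chi(G)}{G}$. Since $G$ is bipartite, we have $\chi(G) \leq 2$, so it suffices to bound $\fixnumb{}{2}{G}$, i.e., the worst-case cost of fixing an arbitrary $2$-coloring. (If $G$ has no edges then $\chi(G)=1$ and the bound is trivial, so I may assume $G$ is bipartite with a genuine bipartition $X,Y$.)

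Next I would fix an arbitrary $2$-coloring $\varphi$ of $G$ and apply the formula of Proposition~\ref{bipartite}, which expresses $\fixnumb{\varphi}{2}{G}$ as a sum over connected components $C$ of the minimum of two quantities,
\[
\min \left\{ |( X \ominus \varphi^{-1}(1) ) \cap V(C) |,\, | ( X \ominus \varphi^{-1}(2) ) \cap V(C)| \right\}.
\]
The key combinatorial observation, already flagged in the text, is that within a single component $C$ these two sets partition $V(C)$: a vertex of $C$ either agrees with the canonical bipartition-based coloring or with its swap, so the two counts sum to $|V(C)|$. Hence their minimum is at most $\lfloor |V(C)|/2 \rfloor$.

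I would then sum this per-component bound over all components. Because the components are vertex-disjoint, $\sum_C |V(C)| = n$, and summing the floors gives $\fixnumb{\varphi}{2}{G} \leq \sum_C \lfloor |V(C)|/2 \rfloor \leq \lfloor n/2 \rfloor$; the last inequality uses that a sum of floors of halves is at most the floor of half the total. Since $\varphi$ was an arbitrary $2$-coloring, taking the maximum yields $\fixnumb{}{2}{G} \leq \lfloor n/2 \rfloor$, and combining with Corollary~\ref{cor:rischi} finishes the proof.

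I do not expect any serious obstacle here: the argument is a direct consequence of the earlier results, and the only mildly delicate point is the floor arithmetic when passing from $\sum_C \lfloor |V(C)|/2 \rfloor$ to $\lfloor n/2 \rfloor$. This step is justified because each term $\lfloor |V(C)|/2 \rfloor \leq |V(C)|/2$, so the sum is at most $n/2$, and being an integer it is at most $\lfloor n/2 \rfloor$. That is the entire content of the corollary.
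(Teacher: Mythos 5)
Your proof is correct and takes essentially the same route as the paper: the paper likewise applies the per-component formula of Proposition~\ref{bipartite} and uses the fact that the two quantities in the minimum sum to $|V(C)|$ (phrased there as: if one is at least $|C|/2$, the other is at most $|C|/2$), then sums over components. Your explicit invocation of Corollary~\ref{cor:rischi} and the careful floor arithmetic merely spell out steps the paper leaves implicit.
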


This result can be generalized for non-bipartite graphs.

\begin{thm} \label{thm-fixupper}
For all $G$ it holds that $\fixnumber{G} \leq \left  \lfloor n\cdot\frac{\chi(G)-1}{\chi(G)} \right \rfloor$.
\end{thm}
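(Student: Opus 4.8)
The plan is to reduce to the case $r=\chi(G)$ via Corollary~\ref{cor:rischi}, and then run an averaging argument over the color-permutations of one fixed optimal coloring. Write $\chi := \chi(G)$. By Corollary~\ref{cor:rischi} we have $\fixnumber{G} = \fixnumb{}{\chi}{G}$, so it suffices to show that for \emph{every} $\chi$-coloring $\varphi$ of $G$ there is a proper $\chi$-coloring $\varphi'$ with $\dist(\varphi,\varphi') \leq \lfloor n(\chi-1)/\chi \rfloor$. Accordingly, I would fix an arbitrary $\chi$-coloring $\varphi$, and fix some proper $\chi$-coloring $\psi$ of $G$ using colors from $\{1,\ldots,\chi\}$, which exists precisely because $\chi = \chi(G)$.

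The key observation is that permuting the colors of a proper coloring keeps it proper, so for each permutation $\pi$ in the symmetric group $S_\chi$ (permutations of $\{1,\ldots,\chi\}$) the coloring $\pi \circ \psi$ is a legitimate proper candidate witness. The number of vertices on which $\pi \circ \psi$ \emph{agrees} with $\varphi$ equals $\sum_{i=1}^{\chi} n_{i,\pi(i)}$, where $n_{ij} := |\{v \in V \colon \psi(v)=i,\ \varphi(v)=j\}|$. I would then average this quantity over all $\pi \in S_\chi$: each entry $n_{ij}$ is counted in exactly the $1/\chi$ fraction of permutations satisfying $\pi(i)=j$, so the average number of agreements is $\frac{1}{\chi}\sum_{i,j} n_{ij} = \frac{1}{\chi}\cdot n = n/\chi$. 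Hence some permutation $\pi^*$ achieves at least $n/\chi$ agreements, and since the number of agreements is an integer, at least $\lceil n/\chi \rceil$ agreements.

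Finally I would take $\varphi' = \pi^* \circ \psi$, which is proper and satisfies $\dist(\varphi,\varphi') = n - (\text{number of agreements}) \leq n - \lceil n/\chi \rceil = \lfloor n(\chi-1)/\chi \rfloor$, using the elementary identity $n - \lceil x \rceil = \lfloor n - x \rfloor$ valid for integer $n$. As $\varphi$ was arbitrary, this bounds $\fixnumb{}{\chi}{G}$ and hence $\fixnumber{G}$ as claimed; specializing to $\chi=2$ recovers Corollary~\ref{cor:bipartite-upper}. I do not expect a serious obstacle here: the argument is a clean averaging over a group action, and the only points that need care are verifying that $\pi \circ \psi$ remains proper (immediate, since adjacent vertices keep distinct colors under a relabeling) and the floor/ceiling manipulation at the end.
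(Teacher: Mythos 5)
Your proof is correct and is essentially the paper's own argument: the paper also fixes a proper $\chi(G)$-coloring, considers its $\chi!$ color-permutations, and uses a counting argument on the agreement sets $C_\sigma = \bigcup_i (\varphi^{-1}(i) \cap \varphi'^{-1}(\sigma(i)))$ (your $\sum_i n_{i,\pi(i)}$, with indices swapped) to find a permutation agreeing with $\varphi$ on at least $n/\chi$ vertices. Your direct averaging over $S_\chi$ is in fact a cleaner phrasing of the paper's proof-by-contradiction, whose displayed ``union'' computation is really a sum with multiplicities $(\chi-1)!$, and your floor/ceiling step $n - \lceil n/\chi \rceil = \lfloor n(\chi-1)/\chi \rfloor$ is valid.
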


\begin{proof}
For a graph $G = (V,E)$ set $r = \chi(G)$ and let $\varphi$ an $r$-coloring of $G$ such that $\fixnumb{\varphi}{r}{G} = \fixnumber{G}$.
Consider some proper $r$-coloring $\varphi'$ of $G$. Let $A_i = \varphi^{-1}(i)$ and $A'_i = \varphi'^{-1}(i)$ for all $i \in [r]$.
By $B_{i,j}$ we denote $A_i \cap A'_j$. Clearly $\bigcup_{i,j} B_{i,j} = V$. Note that for any permutation $\sigma$ of $[r]$, we can obtain a proper $r$-coloring of $G$ from $\varphi$ by recoloring all vertices but $C_\sigma = \bigcup _{i \in [r]} B_{i, \sigma(i)}$ (this proper coloring will be equivalent to $\varphi'$ up to the permutation of colors).

Suppose that $|C_\sigma| < \frac{n}{r}$ for all $\sigma$. On one hand we have $| \bigcup_\sigma C_\sigma| \leq \sum_\sigma |C_\sigma| < r! \cdot \frac{n}{r}$. On the other hand, we have:

\begin{align*}
|\bigcup_\sigma C_\sigma| = & |\bigcup_\sigma \bigcup_i B_{i,\sigma(i)}| = |\bigcup_i \bigcup_\sigma B_{i,\sigma(i)}|=|\bigcup_i \bigcup_j \bigcup_{\substack{\sigma \text{ s.t. } \\ \sigma(i)=j}} B_{i,j}|&\\
=&  (r-1)! |\bigcup_i \bigcup_j B_{i,j}| = (r-1)! n.
\end{align*}

This is a contradiction, so there exists $\sigma$ with $|C_\sigma| \geq \frac{n}{r}$ and therefore we can obtain a proper $r$-coloring of $G$ by recoloring at most $n \cdot \frac{r-1}{r}$ vertices. Since $\fixnumber{G}$ is an integer, we obtain our claim.
\end{proof}

To see  that this bound is attainable, consider a graph $G(m,r)$ on $n = m \cdot r$ vertices, consisting of $m$ disjoint copies of $K_r$. Clearly $\chi(G(m,r)) = r$. Let $\varphi$ be an $r$-coloring of $G(m,r)$ such that $\varphi(v)=1$ for every vertex $v$.
Clearly we have to recolor every vertex but one from every copy of $K_r$, which gives us $\fixnumber{G(m,r)} \geq \fixnumb{\varphi}{r}{G(m,r)} = m(r-1) = n \; \frac{r-1}{r}$.

However, there are graphs $G$ for which the value of $\fixnumber{G}$ is significantly smaller. For example, consider an odd cycle $C_n$ for $n \geq 9$. Clearly $\chi(C_n) = 3$. Let $\varphi$ be any coloring of $C_n$ with $r \geq 3$ colors. Arbitrarily choose vertex $v$ and remove it from $C_n$, obtaining a path $P_{n-1}$. Since $\chi(P_{n-1})=2$, then by Theorem \ref{thm-fixupper}, we can obtain a proper coloring of $P_{n-1}$ by recoloring at most $\lfloor (n-1)/2 \rfloor$ vertices. Then we can restore vertex $v$ and, if necessary, recolor it to an available color (there is always at least one). In this way we performed at most $1 + \lfloor (n-1)/ 2 \rfloor$ recoloring operations, which is roughly $\frac{n}{2}$ compared to $\frac{2n}{3}$ given by Theorem \ref{thm-fixupper}.

Another direction of research is to find lower bounds for the fixing number. Let us start with considering a star.

\begin{prop} \label{prop:star-lower}
Let $H$ be a star with $k$ leaves. Then there exists a 2-coloring $\varphi$, such that:
$\fixnumb{\varphi}{2}{H} \geq \lceil k/2 \rceil$.
\end{prop}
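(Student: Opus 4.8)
The statement asks to exhibit a 2-coloring $\varphi$ of the star $H$ (center $c$ and $k$ leaves) that is expensive to fix, namely $\fixnumb{\varphi}{2}{H} \geq \lceil k/2 \rceil$. The plan is to pick the ``balanced'' coloring: color the center $c$ with color $1$, then split the $k$ leaves as evenly as possible between the two colors, say roughly $\lfloor k/2 \rfloor$ leaves get color $1$ and $\lceil k/2 \rceil$ leaves get color $2$ (or vice versa). The intuition is that the leaves sharing the center's color are exactly the conflicting ones, so the coloring deliberately creates a large conflict graph $\conf$ (a large star of monochromatic edges), while simultaneously ensuring that recoloring the single center vertex does not help much.

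The key step is to analyze the two ways of fixing $\varphi$ and show both cost at least $\lceil k/2 \rceil$. Since $H$ is a connected bipartite graph with bipartition classes $X = \{c\}$ and $Y = (\text{the leaves})$, I would invoke Proposition~\ref{bipartite}, which gives $\fixnumb{\varphi}{2}{H}$ exactly as the minimum, over the two proper 2-colorings of the (connected) star, of the number of vertices on which $\varphi$ disagrees with that proper coloring. A star has precisely two proper 2-colorings: either $c$ gets color $1$ and all leaves get color $2$, or $c$ gets color $2$ and all leaves get color $1$. For the balanced $\varphi$ above, computing the Hamming distance to each of these two target colorings and taking the minimum directly yields the bound. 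Concretely, if $a$ leaves are colored $1$ and $b=k-a$ leaves are colored $2$ under $\varphi$ (with the center colored $1$), then the distance to the ``$c{=}1$, leaves${=}2$'' coloring is $a$, and the distance to the ``$c{=}2$, leaves${=}1$'' coloring is $1+b$; choosing $a,b$ balanced forces the smaller of these two quantities to be at least $\lceil k/2 \rceil$.

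The main obstacle — though a mild one — is just the bookkeeping of the $\pm 1$ contribution coming from whether the center itself must be recolored, and verifying that the even split genuinely optimizes the adversary's lower bound rather than leaving an off-by-one gap. I would resolve this by choosing $a = \lfloor k/2 \rfloor$ and $b = \lceil k/2 \rceil$, so that $\min\{a,\, 1+b\} = \min\{\lfloor k/2 \rfloor,\, 1+\lceil k/2 \rceil\} = \lfloor k/2 \rfloor$; a symmetric choice with the center colored $2$ (or swapping the roles of $a$ and $b$) pushes the minimum up to $\lceil k/2 \rceil$. A cleaner alternative, avoiding Proposition~\ref{bipartite} entirely, is a direct argument: the conflict graph $\conf$ consists of the center together with all leaves sharing its color; by Observation~\ref{conflict} one must recolor an endpoint of each such monochromatic edge, and since recoloring the center resolves all its conflicts at once but then creates new conflicts with the previously-agreeing leaves, one checks by a short case split that at least $\lceil k/2 \rceil$ recolorings are unavoidable in either case. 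Either route is routine; the real content is simply the choice of the balanced adversarial coloring.
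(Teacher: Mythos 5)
Your proposal is correct and takes essentially the same route as the paper: the paper uses exactly the balanced adversarial coloring in which the larger group of $\lceil k/2 \rceil$ leaves shares the center's color, and then computes $\min\{\lceil k/2\rceil,\ \lfloor k/2\rfloor + 1\} = \lceil k/2\rceil$ by the same two-case analysis (either recolor all conflicting leaves, or recolor the center together with the other group). Your first split (center's color on only $\lfloor k/2\rfloor$ leaves) would give just $\lfloor k/2\rfloor$, but you explicitly identified and corrected this by swapping the groups, which is precisely the paper's choice; appealing to Proposition~\ref{bipartite} is merely a formalized version of the paper's direct observation that a star admits exactly two proper 2-colorings.
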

\begin{proof}
Partition the set of leaves of $H$ into 2 subsets $L_1,L_2$, such $|L_1| = \lceil k/2 \rceil$ and $\lfloor k/2 \rceil$.
Color the vertices from each group $L_i$ with the color $i$, and the root of $H$ with the color 1.
Then, to make the coloring proper, we either have to recolor at least $\lceil k/2 \rceil$ leaves, or at least $\lfloor k/2 \rfloor$ leaves and the root.

If $k$ is even, then \[\min ( \lceil k/2 \rceil , \lfloor k/2 \rfloor +1 ) = \min ( k/2 , (k-1)/2 +1 ) = k/2 = \lceil k/2 \rceil.\]
On the other hand, if $k$ is odd, then
\[\min ( \lceil k/2 \rceil , \lfloor k/2 \rfloor +1 ) = \min ( (k+1)/2, (k-1)/2+1  ) = (k+1)/2 = \lceil k/2 \rceil.\]
\end{proof}

Proposition~\ref{prop:star-lower} can be generalized to all graphs in the following way.

\begin{prop} \label{prop:lower}
Let $G$ be a connected graph with $n \geq 2$ vertices. Then $\fixnumber{G} \geq \lfloor n/2 \rfloor$.
\end{prop}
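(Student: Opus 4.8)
The plan is to generalize the star construction of Proposition~\ref{prop:star-lower}: reduce the connected graph $G$ to a spanning tree, on which the exact bipartite formula of Proposition~\ref{bipartite} applies, and then choose a balanced two-coloring that lies as far as possible from every proper coloring.

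First I would fix a spanning tree $T$ of $G$, which exists because $G$ is connected. The guiding observation is that $E(T)\subseteq E(G)$, so every proper coloring of $G$ is also a proper coloring of $T$; hence for any coloring $\varphi$ and any number of colors $r$ the minimum defining $\fixnumb{\varphi}{r}{G}$ is taken over a subset of the colorings feasible for $\fixnumb{\varphi}{r}{T}$, giving $\fixnumb{\varphi}{r}{G} \ge \fixnumb{\varphi}{r}{T}$. It therefore suffices to exhibit a coloring of $T$ that is far from every proper coloring of $T$. Since $T$ is connected and bipartite with bipartition $(X,Y)$, Proposition~\ref{bipartite} reduces $\fixnumb{\varphi}{2}{T}$ to $\min\{\,|X\ominus\varphi^{-1}(1)|,\ |X\ominus\varphi^{-1}(2)|\,\}$. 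Writing $a=|X\cap\varphi^{-1}(1)|$, $b=|X\cap\varphi^{-1}(2)|$, $c=|Y\cap\varphi^{-1}(1)|$ and $d=|Y\cap\varphi^{-1}(2)|$, these two terms equal $b+c$ and $a+d$, and they sum to $n$. I would then pick the two-coloring $\varphi$ so that $b+c=\lfloor n/2\rfloor$, which is always achievable by setting $b=\min(|X|,\lfloor n/2\rfloor)$ and $c=\lfloor n/2\rfloor-b$ (one checks that $0\le c\le |Y|$). This balances the two terms and yields $\min\{b+c,\,a+d\}=\lfloor n/2\rfloor$, exactly mirroring the way the star's leaves were split into two equal halves. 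When $G$ is bipartite we have $\chi(G)=2$, this two-coloring is admissible, and the chain $\fixnumber{G}\ge\fixnumb{\varphi}{2}{G}\ge\fixnumb{\varphi}{2}{T}=\lfloor n/2\rfloor$ closes the case (and, together with Corollary~\ref{cor:bipartite-upper}, even gives equality for connected bipartite graphs).

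The step I expect to be the main obstacle is the case $\chi(G)\ge 3$. Here $\fixnumber{G}$ is evaluated at $r=\chi(G)$ by Corollary~\ref{cor:rischi}, so two colors no longer suffice to make the target coloring proper, and the extra colors available in a proper $\chi(G)$-coloring of $G$ provide genuine freedom that can push the distance below the two-color estimate. To make the bound survive, one must build the initial coloring out of the non-bipartite structure of $G$ — for instance the odd cycles or cliques responsible for $\chi(G)\ge 3$ — so that the surplus colors cannot be used to repair the conflicts cheaply. Controlling this interaction between the balanced splitting inherited from the spanning tree and the extra colors forced by $\chi(G)$ is the delicate heart of the argument.
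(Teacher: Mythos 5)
Your bipartite half is correct, and it is actually cleaner than the paper's own treatment of that part. The paper also passes to a spanning tree $H$, but then constructs the bad $2$-coloring by induction, pruning the star around a deepest leaf and invoking Proposition~\ref{prop:star-lower} in the base and odd cases; you instead read the bad coloring directly off the exact formula of Proposition~\ref{bipartite}, balancing the two complementary terms $b+c$ and $a+d$ that sum to $n$, and your feasibility check for achieving $b+c=\lfloor n/2\rfloor$ is fine. Your monotonicity step $\fixnumb{\varphi}{r}{G} \geq \fixnumb{\varphi}{r}{T}$ is also sound, since every proper $r$-coloring of $G$ is a proper $r$-coloring of $T$ \emph{for the same} $r$.

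The gap is that you never prove the statement when $\chi(G)\geq 3$; you only describe what an argument would have to control. For comparison, the paper disposes of this case in one sentence, via ``the straightforward observation that $\fixnumber{G} \geq \fixnumber{H}$'' for the spanning tree $H$. But that observation is not straightforward: by Corollary~\ref{cor:rischi} the two sides are evaluated at different numbers of colors ($\chi(G)$ versus $2$), and extra colors genuinely cheapen fixing --- exactly the phenomenon you flagged. In fact your worry can be sharpened into a counterexample to the proposition as stated: let $G$ be $K_{1,m}$ plus one edge $l_1l_2$ between two leaves, so $G$ is connected, $\chi(G)=3$, and $\fixnumber{G}=\fixnumb{}{3}{G}$. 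Given \emph{any} $3$-coloring $\varphi$, let $j$ be the color least used on the leaves (at most $m/3$ of them), recolor the hub to $j$, recolor every $j$-colored leaf to a non-$j$ color (keeping $l_1,l_2$ distinct), and spend at most one extra recoloring if $l_1,l_2$ still clash; this yields a proper coloring after at most $m/3+2$ recolorings. For $m=15$, $n=16$, this gives $\fixnumber{G}\leq 7 < 8 = \lfloor n/2\rfloor$, while the spanning star $H$ has $\fixnumber{H}=\fixnumb{}{2}{H}\geq \lceil m/2\rceil = 8$ by Proposition~\ref{prop:star-lower}, refuting both the paper's monotonicity claim and the bound itself. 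So the step you identified as ``the delicate heart'' is not merely delicate: the paper's proof breaks precisely there, and the $\lfloor n/2\rfloor$ lower bound is safe (by your argument or the paper's) only in the bipartite case, where the relevant color count stays at $2$.
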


\begin{proof}
Let $H$ be a spanning tree of $G$  We will construct a 2-coloring $\varphi$ of $H$, such that $\fixnumb{\varphi}{2}{H} \geq \lfloor n/2 
\rfloor$. 
First, observe that if $H$ is a star, then such a coloring can be obtained using Proposition~\ref{prop:star-lower}.

We do induction on the number $n$ of vertices in $H$. If $n \leq 3$, then $H$ is a star, so the claim holds.
So assume that $n \geq 4$ and $H$ is not a star.
Let $H$ be rooted in an arbitrary vertex $w$ and let $u_1$ be a leaf of $H$ with the maximum distance from $w$. Let $v$ be the parent of $u_1$ and $U=\{u_1,u_2,\ldots,u_\ell\}$ be the children of $v$. Clearly all vertices in $U$ are leaves in $H$. Moreover, since $H$ is not a star, there are at least two vertices (including $w$) that are not in $\{v\} \cup U$.

Consider two cases. If $\ell$ is even, then let $H'$ be the graph obtained from $H$ by removing all vertices from $U$.
By inductive assumption, there is a 2-coloring $\varphi'$ of $H'$, which requires at least $\lfloor (n-\ell)/2 \rfloor$ recoloring operations to be made proper. Partition the set $U$ into $2$ equal subsets and extend $\varphi'$ to a coloring $\varphi$ of $H$, by assigning a distinct color to each of these subsets. Note that the number of recoloring operations required to make $\varphi$ proper is at least $\lfloor (n-\ell)/2 \rfloor + \ell/2 = \lfloor n/2 \rfloor$.

If $\ell$ is odd, let $H'$ be the graph obtained from $H$ by removing vertices from $U \cup \{v\}$. By inductive assumption there is a coloring $\varphi'$ of $H'$, which requires at least $\lfloor (n-\ell-1)/2 \rfloor$ recoloring operations. Moreover, by Proposition~\ref{prop:star-lower} there is a coloring $\varphi''$ of the star $H[U \cup \{v\}]$, which requires at least $\lceil \ell/2 \rceil = (\ell+1)/2$ recoloring operations. Let $\varphi$ be the coloring of $H$ obtained as a union of $\varphi'$ and $\varphi''$ (note that the domains of these colorings form a partition of the vertex set of $H$). Thus, the minimum number of recoloring operations required to fix $\varphi$ is at least $\lfloor (n-\ell-1)/2 \rfloor + (\ell+1)/2 = \lfloor n/2 \rfloor$.

Recall from Corollary \ref{cor:rischi} that $\fixnumber{H} = \fixnumb{}{2}{H} \geq \fixnumb{\varphi}{2}{H} \geq \lfloor n/2 \rfloor$.
The claim follows from a straightforward observation that $\fixnumber{G} \geq \fixnumber{H}$.
\end{proof}

Combining Corollary \ref{cor:bipartite-upper} and Proposition \ref{prop:star-lower}, we obtain the following:
\begin{cor}
For a connected bipartite graph $G$ it holds that $\fixnumber{G} = \lfloor n/2 \rfloor$.
\end{cor}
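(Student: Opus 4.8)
The plan is to obtain the claimed equality by sandwiching $\fixnumber{G}$ between matching upper and lower bounds that are already available in the paper. The upper bound is immediate from Corollary~\ref{cor:bipartite-upper}, which guarantees $\fixnumber{G} \leq \lfloor n/2 \rfloor$ for \emph{every} bipartite graph on $n$ vertices; here we simply invoke it, using the hypothesis that $G$ is bipartite. The lower bound comes from Proposition~\ref{prop:lower}, which asserts $\fixnumber{G} \geq \lfloor n/2 \rfloor$ for every connected graph with $n \geq 2$ vertices; we invoke it using the hypothesis that $G$ is connected. Putting the two inequalities together yields $\fixnumber{G} = \lfloor n/2 \rfloor$.

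First I would state that since $G$ is bipartite, Corollary~\ref{cor:bipartite-upper} applies and gives $\fixnumber{G} \leq \lfloor n/2 \rfloor$. Next I would note that since $G$ is connected (and we may assume $n \geq 2$, the case $n = 1$ being trivial as both sides equal $0$), Proposition~\ref{prop:lower} applies and gives $\fixnumber{G} \geq \lfloor n/2 \rfloor$. The chain of inequalities
\[
\left\lfloor n/2 \right\rfloor \leq \fixnumber{G} \leq \left\lfloor n/2 \right\rfloor
\]
then forces equality, completing the argument.

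I do not anticipate a genuine obstacle here, since the statement is a direct combination of two previously established results rather than a new argument. The only point requiring a moment of care is verifying that the two hypotheses line up correctly: the upper bound needs bipartiteness but not connectivity, while the lower bound needs connectivity but not bipartiteness, so it is precisely the intersection of these two classes---connected bipartite graphs---on which both bounds hold simultaneously and collapse to an equality. The degenerate case $n=1$ should be mentioned explicitly, as Proposition~\ref{prop:lower} is stated for $n \geq 2$, but it is handled trivially since a single isolated vertex admits a proper coloring without any recoloring.
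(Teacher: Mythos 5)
Your proof is correct and takes essentially the same route as the paper, which also obtains the equality by combining the bipartite upper bound of Corollary~\ref{cor:bipartite-upper} with the connected-graph lower bound (the paper's text cites Proposition~\ref{prop:star-lower}, but this is evidently a slip for Proposition~\ref{prop:lower}, the result you correctly invoke). Your explicit handling of the degenerate case $n=1$ is a small point of care the paper leaves implicit.
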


We believe that it would be interesting to investigate $\fixnumber{G}$ for non-bipartite $G$. In particular, what are graph classes, for which the bound from Theorem \ref{thm-fixupper} is not tight? How does $\fixnumber{G}$ depend on structural parameters of $G$?  Given a graph $G$, how to construct the ,,worst possible'' initial coloring?
Finally, it might be interesting to study the computational problem of determining $\fixnumber{G}$ for a given graph $G$.

\vskip 10pt
\noindent{\textbf{Acknowledgement.}} The authors are sincerely grateful to Dieter Kratsch for valuable discussion on the topic. We are also grateful to anonymous reviewers for their numerous comments, especially for suggesting the bound in Proposition~\ref{prop:lower} and its consequences.

\end{document}